\newcommand{\bbR}    {\mathbb R}
\newcommand{\bbZ}    {\mathbb Z}
\newcommand{\bbN}    {\mathbb N}
\newcommand{\bbI}    {\mathbb I}
\newcommand{\bbJ}    {\mathbb J}
\newcommand{\cA}    {{\cal A}}
\newcommand{\cC}    {{\cal C}}
\newcommand{\cF}    {{\cal F}}
\newcommand{\cH}    {{\cal H}}
\newcommand{\cI}    {{\cal I}}
\newcommand{\cJ}    {{\cal J}}
\newcommand{\cS}    {{\cal S}}
\newcommand{\cT}    {{\cal T}}
\newcommand{\al}    {\alpha}
\newcommand{\be}    {\beta}
\newcommand{\de}    {\delta}
\newcommand{\De}    {\Delta}
\newcommand{\eps}   {\epsilon}
\newcommand{\io}    {\iota}
\newcommand{\ka}    {\kappa}
\newcommand{\la}    {\lambda}
\newcommand{\ze}    {\zeta}
\newcommand{\om}    {\omega}
\newcommand{\Om}    {\Omega}
\newcommand{\te}    {\theta} 
\newcommand{\Te}    {\Theta}
\newcommand{\rE}     {\mathrm E}
\newcommand{\rF}     {\mathrm F}
\newcommand{\rf}     {\mathrm f}
\newcommand{\rg}     {\mathrm g}
\newcommand{\rH}     {\mathrm H}
\newcommand{\ri}     {\mathrm i}
\newcommand{\rj}     {\mathrm j}
\newcommand{\rK}     {\mathrm K}
\newcommand{\rk}     {\mathrm k}
\newcommand{\rl}     {\mathrm l}
\newcommand{\rM}     {\mathrm M}
\newcommand{\rN}     {\mathrm N}
\newcommand{\rT}     {\mathrm T}
\newcommand{\rV}     {\mathrm V}
\newcommand{\rX}     {\mathrm X}
\newcommand{\rx}     {\mathrm x}
\newcommand{\const}  {\mathrm{const}} 
\newcommand{\fin}    {\mathrm {fin}} 
\newcommand{\CE}     {\mathrm{CE}}
\newcommand{\PE}     {\mathrm{PE}}
\newcommand{\Sol}    {\mathrm {Sol}}
\newcommand{\bB}    {\mathbf B}
\newcommand{\mbf}   {\mathbf f}
\newcommand{\bp}    {\mathbf p}
\newcommand{\bx}    {\mathbf x}
\newcommand{\bu}    {\mathbf u}
\newcommand{\bj}    {\mathbf j}
\newcommand{\bCE}   {\mathbf{CE}}
\newcommand{\bPE}   {\mathbf{PE}}
\newcommand{\bCPE}  {\mathbf{CPE}}
\newcommand{\bNS}   {\mathbf{NS}}
\newcommand{\fD}    {\mathfrak D}
\newcommand{\w}     {\wedge}
\newcommand{\ck}    {\check}
\newcommand{\na}    {\nabla}
\newcommand{\bze}  {\pmb{\zeta}}
\newcommand{\p}     {\partial}
\DeclareMathOperator{\Hom}   {Hom}
\DeclareMathOperator{\End}   {End}
\DeclareMathOperator{\ord}   {ord}
\DeclareMathOperator{\im}    {Im}
\DeclareMathOperator{\ke}    {Ker}
\DeclareMathOperator{\com}   {{\scriptstyle\circ}\,}
\DeclareMathOperator{\graph} {graph}
\DeclareMathOperator{\ev}    {ev}
\DeclareMathOperator{\Sym}   {Sym}
\newtheorem{theorem}{Theorem}
\newtheorem{lemma}{Lemma}
\newtheorem{prop}{Proposition}
\theoremstyle{definition} 
\newtheorem{rem}{Remark}
\begin{document} 
\title{Navier--Stokes equations, the algebraic aspect} 
\author
          {
             Zharinov V.V.
             \thanks{Steklov Mathematical Institute, e-mail: zharinov@mi-ras.ru}
            }
\date{}
\maketitle 
\begin{abstract} 
Analysis of the Navier-Stokes equations in the frames of the algebraic approach 
to systems of partial differential equations (formal theory of differential equations) 
is presented. 
\end{abstract} 

{\bf Keywords:} Navier--Stokes equations, integrability conditions, evolution, 
constraints, differential algebra, symmetries, cohomologies. 

\section{Preliminaries.} 

\subsection{Navier-Stokes equations}
The Navier-Stokes equations in physical notation are 
(see \cite{Sedov} -- \cite{FRR}, for example) 
\begin{align} 
	\bu_t+(\bu\cdot\na)\bu&=\nu\De\bu-\na p,   \label{NS1} 
	\\ 
	           \na\cdot\bu&=0, \label{NS2}
\end{align} 
where $\bu=(u^1,u^2,u^3)$ is the velocity field, $t$ is the time variable, the dot 
"$\cdot$" denotes the scalar product, $\na=(\na_1,\na_2,\na_3)$ is the gradient 
with respect to the spacial variables $\bx=(x^1,x^2,x^3)$, the parameter $\nu>0$ 
is the viscosity of the flow (do not confuse it with the index $\nu$), 
$\De=\na^2_1+\na^2_2+\na^2_3$ is the Laplacian, $p$ is the pressure. 

Here we study the Navier-Stokes equations from the algebra-geometrical point of view. 
The system (\ref{NS1})-(\ref{NS2}) is not formally integrable, as we need, to get the equivalent formally integrable system we add (see \cite{S}, for example) 
the trivial differential prolongations  
\begin{equation}\label{NS3}  
		\na\cdot\bu_t=0, \quad \na(\na\cdot\bu)=0, 
\end{equation} 
and the non-trivial differential prolongation ({\it hidden integrability condition})
\begin{equation}\label{NS4}  
	\De p+\na((\bu\cdot\na)\bu)=\De p+\na\bu\cdot\na\bu=0 	
\end{equation} 
(we have used the equation (\ref{NS2})). 

\begin{rem}\label{IC}
The equation (\ref{NS4}) is a Poisson equation for the pressure $p$ with the density 
$\rho=-\na\bu\cdot\na\bu$, it can be considered as the 
{\it inner constraint} for the Navier-Stokes equations.  
\end{rem}

\subsection{Notation}\label{NT} 

We use rather sophisticated notation: 
\begin{itemize} 
	\item 
		$\bbZ=\{0,\pm 1,\pm 2, \dots\}\supset\bbZ_+=\{0,1,2,\dots\}\supset\bbN=\{1,2,3,\dots\}$; 
	\item 
		$\rM=\overline{1,m}=\{1,2,\dots,m\}$, \quad $\rN=\overline{2,m}=\{2,\dots,m\}$, 
		\quad $\rM=\{1\}\cup\rN$; 
	\item		
		$\bbI=\bbZ^\rM_+=\{\ri=(i^1,\dots,i^m)\mid i^\mu\in\bbZ_+, \ \mu\in\rM\}$, 
		\quad $|\ri|=i^1+\dots+i^m$; 
	\item 
		$\ri+\rj=(i^1+j^1,\dots,i^m+j^m)$, \quad 
		$\ri+(\mu)=(i^1,\dots,i^\mu+1,\dots,i^m)$, \\ 
		$(\mu)=0+(\mu)$, for all $\ri,\rj\in\bbI$, $\mu\in\rM$; 
	\item 
		$\bbJ=\bbZ^\rN_+=\{\rj=(j^2,\dots,j^m)\mid j^\al\in\bbZ_+, \ \al\in\rN\}$; 
	\item 
		$\bbI=\bbZ_+\times\bbJ=\{\ri=(i^1,\rj)\mid i^1\in\bbZ_+, \ \rj\in\bbJ\}$; 
	\item 
		$\bbI_0=\{\ri\in\bbI\mid i^1=0\}=\{0\}\times\bbJ$, \quad  
		$\bbI_1=\{\ri\in\bbI\mid i^1=0,1\}=\{0,1\}\times\bbJ$;  
	\item 
		$\bbI'_0=\bbI\setminus\bbI_0=\{\ri\in\bbI\mid i^1>0\}$, \ 
		$\bbI'_1=\bbI\setminus\bbI_1=\{\ri\in\bbI\mid i^1>1\}$, \\ 
		$\bbI=\bbI_0\cup\bbI'_0=\bbI_1\cup\bbI'_1$.  
\end{itemize}
We shall also use the notation: 
\begin{itemize} 
	\item 
		$\rX=\bbR^\rM=\{\rx=(x^1,\dots,x^m)\mid x^\mu\in\bbR, \ \mu\in\rM\}$; 
	\item 
		$\bbR^\rM_\bbI=\{\bu=(u^\mu_\ri)\mid u^\mu_\ri\in\bbR, \ \mu\in\rM,
		\ \ri\in\bbI\}$, \\ 
		$\bbR^\rN_\bbI=\{\bu=(u^\al_\ri)\mid u^\al_\ri\in\bbR, \ \al\in\rN, 
		\ \ri\in\bbI\}$, \\  
		$\bbR^\rM_\bbI=\bbR^1_\bbI\times\bbR^\rN_\bbI
			=\bbR^1_{\bbI_0}\times\bbR^1_{\bbI'_0}\times\bbR^\rN_\bbI$;  
	\item 
		$\bbR_\bbI
			=\{\bp=(p_\ri)\mid p_\ri\in\bbR, \ \ri\in\bbI\}
			=\bbR_{\bbI_1}\times\bbR_{\bbI'_1}$; 
	\item 
		$\bB=\rX\times\bbR^\rM_\bbI\times\bbR_\bbI$, \\ 
		$\bCE=\rX\times\bbR^1_{\bbI_0}\times\bbR^\rN_\bbI\times
		\bbR_\bbI$, \\ 
		$\bCPE=\rX\times\bbR^1_{\bbI_0}\times\bbR^{\rN}_\bbI\times\bbR_{\bbI_1}$. 
\end{itemize} 
We assume the summation over repeated upper and lower indices in the prescribed limits. 

The paper is rather calculative, the calculations are straightforward  
but tiresome, and we omit them as a rule. 

\subsection{The base components} 

The base space for the Navier--Stokes equations in the algebraic approach 
is the infinite dimensional space $\bB=\rX\times\bbR^\rM_\bbI\times\bbR_\bbI$, 
where $\rX$ is the set of independent variables (space coordinates), 
$\bbR^\rM_\bbI$ is the set of differential variables (velocity coordinates 
and their partial derivatives), $\bbR_\bbI$ is the set of differential variables 
(pressure and its partial derivatives). 

The base algebra in the algebraic approach is the unital commutative associative algebra 
$\cA(\bB)=\cC^\infty_{\fin}(\bB)$ of all smooth real functions 
on the base space $\bB$ {\it of a finite order}, i.e., depending 
on a finite number of the variables $x^\mu,u^\mu_\ri,p_\ri$, where 
$\mu\in\rM$, $\ri\in\bbI$. 
In more detail, the integer $r\in\bbZ_+$ is called the {\it $\bu$-order} of a function 
$f(x,\bu,\bp)\in\cA(\bB)$, we write $\ord_\bu f=r$, if the partial derivative 
$\p_{u^\mu_\ri}f\ne0$ for some variable $u^\mu_\ri$, $|\ri|=r$, while partial derivatives 
$\p_{u^\mu_\ri}f=0$ for all $|\ri|>r$. 
In the same way, the $\bp$-order is defined.  

The base algebra of derivations of the algebra $\cA(\bB)$ is the Lie algebra 
$$
	\fD(\bB)=\fD(\cA(\bB))
	        =\big\{\ze=\ze^\mu\p_{x^\mu}+\ze^\mu_\ri\p_{u^\mu_\ri}+\ze_\ri\p_{p_\ri}
	        	\ \big| \ \ze^\mu,\ze^\mu_\ri,\ze_\ri\in\cA(\bB)\big\}. 
$$ 
The Lie algebra $\fD(\bB)$ splits into the {\it vertical} and {\it horizontal } subalgebras, 
$$ 
	\fD(\bB)=\fD_\rV(\bB)\oplus_{\cA(\bB)}\fD_\rH(\bB), 
$$ 
where 
\begin{itemize} 
	\item 
		$\fD_\rV(\bB)=\{\ze\in\fD(\bB)\mid \ze|_{\cC^\infty(\rX)}=0\}$ 
		\qquad\qquad\quad (note, $\cC^\infty(\rX)\subset\cA(\bB)$)\\
		\phantom{$\fD_\rV(\bB)$ }$=\{\ze=\ze^\mu_\ri\p_{u^\mu_\ri}+\ze_\ri\p_{p_\ri}
		            \mid \ze^\mu_\ri,\ze_\ri\in\cA(\bB)\}$;  
	\item 
		$\fD_\rH(\bB)=\{\ze=\ze^\mu D_\mu\mid \ze^\mu\in\cA(\bB)\}$, 
		\qquad\qquad\qquad\quad \ \ ($D_\mu|_{\cC^\infty(\rX)}=\p_{x^\mu}$)\\ 
		\phantom{$\fD_\rH(\bB)\!=\!\{\ze=\ze^\mu$}
		$D_\mu=\p_{x^\mu}+u^\la_{\ri+(\mu)}\p_{u^\la_\ri}+p_{\ri+(\mu)}\p_{p_\ri}$, 
		\qquad $\la,\mu\in\rM$;  
	\item 
		$[D_\mu,D_\nu]=0$, \quad 
		$[D_\mu,\ze]=(D_\mu\ze^\la_\ri-\ze^\la_{\ri+(\mu)})\p_{u^\la_\ri}
			        +(D_\mu\ze_\ri-\ze_{\ri+(\mu)})\p_{p_\ri}$, \\ 
		for all $\la,\mu,\nu\in\rM$, 
		$\ze=\ze^\la_\ri\p_{u^\la_\ri}+\ze_\ri\p_{p_\ri}\in\fD_\rV(\bB)$. 
\end{itemize}

We denote 
\begin{itemize} 
	\item 
		$D_\ri=(D_1)^{i^1}\com\dots\com(D_m)^{i^m}$ \quad for all \quad 
			$\ri=(i^1,\dots,i^m)\in\bbI$. 
\end{itemize}

The pair $(\cA(\bB),\fD_\rH(\bB))$ is called the 
{\it differential algebra} associated with the base space $\bB$. 

The Lie algebra 
\begin{equation*} 
	\Sym(\cA(\bB),\fD_\rH(\bB))
		=\big\{\ze=\ev_\rf\in\fD_\rV(\bB) \ \big| \ [D_\mu,\ev_\rf]=0, \ \mu\in\rM\big\} 
\end{equation*} 
is the {\it Lie algebra of symmetries} of the differential algebra 
$(\cA(\bB),\fD_\rH(\bB))$, 
where 
\begin{itemize} 
	\item 
		$\rf=(f^\mu,f)\in\cA^\rM(\bB)\times\cA(\bB)$, \quad  
		$f^\mu=\ze^\mu_0$, $f=\ze_0$, 
	\item 
		$\ev_\rf=D_\ri f^\mu\cdot\p_{u^\mu_\ri}+D_\ri f\cdot\p_{p_\ri}$, \quad 
		$D_\ri f^\mu=\ze^\mu_\ri$, $D_\ri f=\ze_\ri$. 
\end{itemize}  

The $\bbZ$-graded $\cA(\bB)$-module $\Om_\rH(\bB)=\oplus_{q\in\bbZ}\Om^q_\rH(\bB)$ 
of {\it horizontal differential forms} is defined as follows,  
\begin{equation*} 
	\Om^q_\rH(\bB)=
	        \begin{cases} 
	      0,                                          & q<0,q>m; \\
	      \cA(\bB),                                   & q=0;     \\
	      \Hom_{\cA(\bB)}(\w^q\fD_\rH(\bB);\cA(\bB)), & 1\le q\le m.  
	        \end{cases}
\end{equation*}
Here, 
\begin{align*} 
\Hom_{\cA(\bB)}&(\w^q\fD_\rH(\bB);\cA(\bB)) \\ 
             &=\big\{\om^q=\om_{\mu_1\dots\mu_q}\cdot dx^{\mu_1}\w\dots\w dx^{\mu_q}
	            \ \big| \ \om_{\mu_1\dots\mu_q}\in\cA(\bB), \text{+ s-s}\big\}, 
\end{align*} 
where the abbreviation "+s-s" states that components $\om_{\mu_1\dots\mu_q}$ 
are skew-symmetric in indices $\mu_1,\dots,\mu_q\in\rM$. 

The {\it horizontal differential} $d_\rH\in\End_\bbR(\Om_\rH(\bB))$, $d_\rH\com d_\rH=0$, 
is defined by the rule, 
\begin{align*} 
	d^q_\rH=d_\rH\big|_{\Om^q_\rH(\bB)} : \Om^q_\rH(\bB)&\to\Om^{q+1}_\rH(\bB), \\ 
	      \om_{\mu_1\dots\mu_q}\cdot dx^{\mu_1}\w\dots\w dx^{\mu_q}
	       &\mapsto 
	       D_{[\mu_0}\om_{\mu_1\dots\mu_q]}\cdot dx^{\mu_0}\w\dots\w dx^{\mu_q},  
\end{align*}
the brackets $[\dots]$ denote the skew-symmetrization in indices 
$\mu_0,\dots,\mu_q\in\rM$, so $d^{q+1}_\rH\com d^q_\rH=0$, $q\in\bbZ$. 

The $\bbZ$-graded linear space 
$H(\Om_\rH(\bB);d_\rH)=\oplus_{q\in\bbZ}H^q_\rH(\bB)$ of cohomologies 
of the differential algebra $(\cA(\bB);\fD_\rH(\bB))$ is defined in the usual way, 
$H(\Om_\rH(\bB);d_\rH)=\ke d_\rH\big/\im d_\rH$, $H^q_\rH(\bB)=\ke d^q_\rH\big/\im d^{q-1}_\rH$, 
$q\in\bbZ$. 

\begin{theorem}[{\bf The main theorem of the formal calculus of variations}, 
see, for example, \cite{O}, and the references therein]\label{T1} 
The linear spaces 
\begin{equation*} 
	H^q_\rH(\bB)=
	    \begin{cases} 
		0,        & q<0, q>m ;\\
		\bbR,     & q=0 ; \\
		0,        & 1\le q\le m-1 ; \\ 
		\cH(\bB), & q=m; 
	\end{cases}
\end{equation*} 
where (in our setting) the Helmholtz linear space 
$$ 
	\cH(\bB)=\big\{\chi=(\chi_\mu,\chi)\in\cA_\rM(\bB)\times\cA(\bB) \ \big| 
	\ \chi_*=\chi^*\big\}, 
$$  
the linear mappings 
	$\chi_*,\chi^* : \cA^\rM(\bB)\times\cA(\bB)\to\cA_\rM(\bB)\times\cA(\bB)$ 
act by the rules: 
\begin{align*} 
	&\cA^\rM(\bB)\times\cA(\bB)\ni\rf=(f^\mu,f)\mapsto
	\chi_*\rf=\rg=(g_\mu,g)\in\cA_\rM(\bB)\times\cA(\bB), \\ 
	&g_\mu=\p_{u^\nu_\ri}\chi_\mu\cdot D_\ri f^\nu+\p_{p_\ri}\chi_\mu\cdot D_\ri f, \quad 
	g=\p_{u^\nu_\ri}\chi\cdot D_\ri f^\nu+\p_{p_\ri}\chi\cdot D_\ri f; \\ 
	&\cA^\rM(\bB)\times\cA(\bB)\ni\rf=(f^\mu,f)\mapsto
	\chi^*\rf=\rg=(g_\mu,g)\in\cA_\rM(\bB)\times\cA(\bB), \\ 
	&g_\mu=(-D)_\ri\big(f^\nu\!\cdot\p_{u^\mu_\ri}\chi_\nu+f\!\cdot\p_{u^\mu_\ri}\chi\big), 
	\quad g=(-D)_\ri\big(f^\nu\!\cdot\p_{p_\ri}\chi_\nu
	+f\!\cdot\p_{p_\ri}\chi\big).   
\end{align*}
The isomorphism $\de=(\de_{u^\mu},\de_p) : H^m_\rH(\bB)\simeq\cH(\bB)$ of linear spaces 
is defined as the  variational derivatives: 
\begin{align*} 
	&\Om^m_\rH(\bB)\ni\om=L\cdot d^mx\mapsto\chi=\de L=(\de_{u^\mu}L,\de_p L), \\
	&\de_{u^\mu}L=(-D)_\ri\p_{u^\mu_\ri}L, \quad \de_p L=(-D)_\ri\p_{p_\ri}L. 
\end{align*}  
\end{theorem}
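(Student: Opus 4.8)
The statement is that the horizontal de Rham complex $(\Om_\rH(\bB),d_\rH)$ is acyclic in the middle, computes $\bbR$ at the bottom and, through the Euler operator $\de$, the Helmholtz space at the top; since there is no topological obstruction (everything in sight is smooth over the contractible $\bB$ and of finite order), the plan is to produce explicit homotopy operators and to treat $q=0$, $1\le q\le m-1$ and $q=m$ separately. For $q=0$ one has $d_\rH f=D_\mu f\cdot dx^\mu$, so $f\in\ke d^0_\rH$ iff $D_\mu f=0$ for every $\mu\in\rM$. Writing $D_\mu=\p_{x^\mu}+u^\la_{\ri+(\mu)}\p_{u^\la_\ri}+p_{\ri+(\mu)}\p_{p_\ri}$ and inspecting the coefficient of a jet variable of top order occurring in $f$ (the induction terminates because $f$ has finite $\bu$- and $\bp$-order), one gets $\p_{u^\la_\ri}f=\p_{p_\ri}f=0$ for all $\ri$, hence $f\in\cC^\infty(\rX)$, and then $\p_{x^\mu}f=0$, so $f$ is a real constant; thus $H^0_\rH(\bB)=\bbR$.

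For $1\le q\le m$ the core is a horizontal homotopy operator $h^q:\Om^q_\rH(\bB)\to\Om^{q-1}_\rH(\bB)$. I would build it either by induction on $m$, peeling off the last variable $x^m$ together with the commuting derivation $D_m$ and integrating a $d_\rH$-closed form along the associated flow (the inductive step being a one-$D_m$ "partial integration" that terminates by finiteness of order), or directly --- following \cite{O} --- via the higher-order Euler operators, writing out the explicit integration-by-parts operator and checking the homotopy identity $d^{q-1}_\rH\com h^q+h^{q+1}\com d^q_\rH=\id$ on $\Om^q_\rH(\bB)$. Because the $D_\mu$ commute, $(\Om_\rH(\bB),d_\rH)$ is the de Rham type complex of the family $\{D_\mu\}$, and this identity yields $H^q_\rH(\bB)=0$ for $1\le q\le m-1$. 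In top degree $q=m$ the homotopy no longer closes up to $\id$: applying it to $\om=L\cdot d^mx$ produces $\om$ as $d^{m-1}_\rH(h^m\om)$ plus a term built solely from the variational derivative $\de L$, so $\om\in\im d^{m-1}_\rH$ iff $\de L=0$. Equivalently --- and this part I would verify by hand --- a direct integration by parts gives $\de\com d^{m-1}_\rH=0$ (total divergences are \emph{null Lagrangians}), while conversely $\de L=0$ forces $L\cdot d^mx\in\im d^{m-1}_\rH$; hence $\de$ descends to an injection $H^m_\rH(\bB)\hookrightarrow\cA_\rM(\bB)\times\cA(\bB)$.

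It remains to identify the image of this injection with the Helmholtz space $\cH(\bB)$. The inclusion $\im\de\subseteq\cH(\bB)$ is direct: for $\chi=\de L$ one expands $\chi_*$ and $\chi^*$ using $\de_{u^\mu}L=(-D)_\ri\p_{u^\mu_\ri}L$ and $\de_pL=(-D)_\ri\p_{p_\ri}L$ and checks $\chi_*=\chi^*$, i.e. that the second variational derivative of $L$ is formally self-adjoint. For the converse --- the Helmholtz criterion proper --- I would use the Volterra/Tonti homotopy: to $\chi=(\chi_\mu,\chi)$ satisfying $\chi_*=\chi^*$ associate
\[
  L=\int_0^1\big(\chi_\mu(x,\la\bu,\la\bp)\,u^\mu+\chi(x,\la\bu,\la\bp)\,p\big)\,d\la ,
\]
where $u^\mu,p$ denote the undifferentiated coordinates ($\ri=0$) and $(\la\bu,\la\bp)$ means scaling every jet coordinate by $\la$ (so $L$ is smooth of finite order, $L\in\cA(\bB)$), and verify $\de L=\chi$. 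The hypothesis $\chi_*=\chi^*$ is consumed precisely here: differentiating $\chi(x,\la\bu,\la\bp)$ in $\la$ and then integrating by parts is exactly what converts $\chi_*$ into $\chi^*$ and makes the identity close. This gives $\de:H^m_\rH(\bB)\xrightarrow{\ \sim\ }\cH(\bB)$.

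I expect the main obstacle to be the honest construction and verification of the homotopy operators in this finite-order, infinitely-many-variable setting: making the integration by parts well defined (no infinite sums, which is guaranteed by finiteness of order), keeping it compatible with the skew-symmetrizations built into $d_\rH$, and --- most delicately --- pinning down how the homotopy identity degenerates at $q=m$ so that the leftover term is honestly the variational derivative $\de$. The remaining ingredients (the $q=0$ argument, the null-Lagrangian identity $\de\com d^{m-1}_\rH=0$, and $\chi_*=\chi^*$ for $\chi=\de L$) are routine, if lengthy, computations with total derivatives.
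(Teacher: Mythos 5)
The paper does not prove Theorem \ref{T1} at all: it is quoted as the classical main theorem of the formal calculus of variations with a reference to \cite{O}, so there is no internal proof to compare against. Your sketch is precisely the standard argument from that literature --- the order-counting argument for $H^0_\rH(\bB)=\bbR$, horizontal homotopy operators (higher Euler operators, or induction peeling off one total derivative) for exactness in degrees $1\le q\le m-1$, the fact that $L\cdot d^mx\in\im d^{m-1}_\rH$ iff $\de L=0$, and the Volterra--Tonti homotopy $L=\int_0^1\big(\chi_\mu(x,\la\bu,\la\bp)u^\mu+\chi(x,\la\bu,\la\bp)p\big)\,d\la$ with the Helmholtz condition $\chi_*=\chi^*$ consumed in the integration by parts --- and it is correct in outline, with the star-shaped fiber and finite-order hypotheses (which you note) being exactly what makes the homotopies well defined in this setting.
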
 
\begin{rem} 
Note, cohomologies are defined up to isomorphisms. 
\end{rem}

\section{Constraints}

\subsection{The divergence-free space}\label{dfs}

The continuity equation $\CE=\{\p_{x^\mu}u^\mu=0\}$ (see (\ref{NS2})) 
has the algebraic counterpart 
$$ 
	\bCE=\big\{(x,\bu,\bp)\in\bB \ \big| \ \CE_\ri=u^\mu_{\ri+(\mu)}=0, \ \ri\in\bbI\big\}.
$$
The subset $\bCE\subset\bB$ is called the {\it solution manifold} of the equation 
$\CE$, because a smooth function $\phi(\rx)=(\phi^\mu(\rx))\in\cC^\infty(\rX;\bbR^\rM)$ 
is a solution of the equation $\CE$ iff the graph 
$$
	\graph\bj\phi(\rx)=\big\{(\rx,\bu,\bp) \ \big| \ u^\mu_\ri=\p_{x^\ri}\phi^\mu(\rx), 
	\ \mu\in\rM, \ \ri\in\bbI\big\}\subset\bCE, 
$$ 
where 
$\bj\phi(\rx)=\{\p_{x^\ri}\phi^\mu(\rx)\mid \ri\in\bbI, \ \mu\in\rM\}$ 
is the {\it jet} of the function $\phi(\rx)$, the variables $\bp=(p_\ri)\in\bbR_\bbI$ are parameters, 
$\p_{x^\ri}=(\p_{x^1})^{i^1}\com\dots\com(\p_{x^m})^{i^m}$, $\ri\in\bbI$.   

With the solution manifold $\bCE$ there is associated the ideal 
\begin{align*}
	\cI=\big\{f\in\cA(\bB) \ \big| \ f\big|_{\bCE}=0\big\}
	&=\big\{f=P(D)u^\mu_{(\mu)} \ \big| \ P(D)\in\cA(\bB)[D]\big\} \\ 
	&=\big\{f=\phi^\ri\cdot u^\mu_{\ri+(\mu)} \ \big| \ \phi^\ri\in\cA(\bB), \ \ri\in\bbI,
		+\fin\big\}  
\end{align*} 
of the algebra $\cA(\bB)$, where $\cA(\bB)[D]$ is the unital associative algebra 
of all polynomials in the indeterminates $D_1,\dots,D_m$ with coefficients in $\cA(\bB)$, 
i.e., 
$$ 
	\cA(\bB)[D]=\big\{P(D)=\phi^\ri\cdot D_\ri \ \big| \ \phi^\ri\in\cA(\bB), 
	 \ \ri\in\bbI, +\fin\big\},
$$ 
where the abbreviation "+fin" states that only a finite number of the coefficients 
$\phi^\ri\ne0$. The ideal $\cI$ is {\it differential}, i.e., 
$D_\mu\big|_{\cI} : \cI\to\cI$, $\mu\in\rM$. 

By definition, we assume $\cA(\bCE)=\cA(\bB)\big/\cI$, i.e., we define the algebra 
$\cA(\bCE)$ of {\it smooth functions on the space $\bCE$} as the factor-algebra. 
We denote 
$\bar f=f|_{\bCE}=f+\cI$ the restriction of the function $f\in\cA(\bB)$ 
to the solution manifold $\bCE$ and its equivalence class in $\cA(\bCE)$. 

We shall need: 
\begin{itemize} 
	\item  \label{DNS}
		the subalgebra of the Lie algebra $\fD(\bB)$   
\begin{align*}
		\fD_\cI(\bB)
		&=\big\{\ze\in\fD(\bB) \ \big| \ \ze|_{\cI} : \cI\to\cI \big\}
		 =\big\{\ze\in\fD(\bB) \ \big| \ \ze|_{\cI}\in\fD(\cI)\big\} \\
		&=\big\{\ze=\ze^\mu\p_{x^\mu}+\ze^\mu_\ri\p_{u^\mu_\ri}+\ze_\ri\p_{p_\ri}
	        	\ \big| \ \ze^\mu_{\ri+(\mu)}\in\cI, \ \ri\in\bbI \big\} 
\end{align*} 
		of all derivations of the algebra $\cA(\bB)$ leaving the ideal $\cI$ invariant;
	\item 
		the subalgebra of the Lie algebra $\fD(\bB)$ and the ideal 
		of the Lie algebra $\fD_\cI(\bB)\subset\fD(\bB)$
\begin{align*}
		\fD(\bB;\cI)
		&=\big\{\ze\in\fD(\bB) \ \big| \ \ze : \cA(\bB)\to\cI\big\} \\
		&=\big\{\ze=\ze^\mu\p_{x^\mu}+\ze^\mu_\ri\p_{u^\mu_\ri}+\ze_\ri\p_{p_\ri} 
	        	\ \big| \ \ze^\mu,\ze^\mu_\ri,\ze_\ri\in\cI\big\} 
\end{align*} 
		of all derivations with the image in the ideal $\cI$, 
	\item 
		the Lie factor-algebra $\fD(\bCE)=\fD_\cI(\bB)\big/\fD(\bB;\cI)$ 
		of derivations of the factor-algebra $\cA(\bCE)$. 
\end{itemize} 
Here, the mapping $\fD_\cI(\bB)\to\fD(\bCE)$ is defined as follows: 
\begin{align*} 
	\fD_\cI(\bB)\ni\ze\mapsto\bar\ze=\ze+\fD(\bB;\cI) : 
	   \cA(\bCE)&\to\cA(\bCE), \\ 
	\bar f=f+\cI&\mapsto\bar\ze\bar f=\overline{\ze f}=\ze f+\cI 
\end{align*} 
(we denote $\bar\ze=\ze+\fD(\bB;\cI)$ the equivalence class of a derivation 
$\ze\in\fD_\cI(\bB)$). 

The Lie algebra $\fD(\bCE)$ splits into the {\it vertical} and {\it horizontal} 
subalgebras: 
\begin{itemize} 
	\item 
		$\fD(\bCE)=\fD_\rV(\bCE)\oplus_{\cA(\bCE)}\fD_\rH(\bCE)$, 
	\item 
		$\fD_\rV(\bCE)
		=\{\bar\ze\in\fD(\bCE)\mid \ze\in\fD_\rV(\bB)\cap\fD_\cI(\bB)\}$, 
	\item 
		$\fD_\rH(\bCE)=\{\bar\ze\in\fD(\bCE)\mid \ze\in\fD_\rH(\bB)\}$ \quad\quad  
		(note, $\fD_\rH(\bB)\subset\fD_\cI(\bB)$). 
\end{itemize} 

The Lie algebra 
\begin{align*} 
	\Sym&(\cA(\bCE),\fD_\rH(\bCE))
		=\big\{\bar\ze\in\fD_\rV(\bCE) \ \big| \ [\bar D_\mu,\bar\ze]=\bar0=\cI, 
			\ \mu\in\rM\big\} \\
		&=\big\{\bar\ze \ \big| \ \ze=D_\ri f^\mu\cdot\p_{u^\mu_\ri}
			+D_\ri f\cdot\p_{p_\ri}, \ \ D_\mu f^\mu\in\cI  \big\} \\
		&=\big\{\bar\ze \ \big| \ \ze=\ev_\rf, \ \   
		\rf=(f^\mu,f)\in\cA^\rM(\bB)\times\cA(\bB),\  (D_\mu f^\mu)\big|_{\bCE}=0\big\}
\end{align*}
is the Lie algebra of symmetries of the differential algebra $(\cA(\bCE),\fD_\rH(\bCE))$.    

The $\bbZ$-graded $\cA(\bCE)$-module $\Om_\rH(\bCE)=\oplus_{q\in\bbZ}\Om^q_\rH(\bCE)$ 
of {\it horizontal differential forms} is defined as follows,  
\begin{equation*} 
	\Om^q_\rH(\bCE)=\begin{cases} 
	             0,                                        & q<0, q>m; \\
	             \cA(\bCE),                                 & q=0; \\
	             \Hom_{\cA(\bCE)}(\w^q\fD_\rH(\bCE);\cA(\bCE)), & 1\le q\le m.  
	             \end{cases}
\end{equation*}
Here, 
\begin{align*} 
\Hom_{\cA(\bCE)}&(\w^q\fD_\rH(\bCE);\cA(\bCE)) \\ 
                &=\big\{\bar\om^q=\bar\om_{\mu_1\dots\mu_q}\cdot 
                d\bar x^{\mu_1}\w\dots\w d\bar x^{\mu_q} 
                \ \big| \ \bar\om_{\mu_1\dots\mu_q}\in\cA(\bCE), 
                \text{+ s-s}\big\},                                                                                                                                                                                                                                                                                                                                                                                                                                                                                                                                                                                                                                                                                                                                                                                        
\end{align*} 
where $d\bar x^\mu(\bar D_\nu)=\bar\de^\mu_\nu=\de^\mu_\nu+\cI$, $\mu,\nu\in\rM$. 

The {\it horizontal differential} $d_\rH\in\End_\bbR(\Om_\rH(\bCE))$, $d_\rH\com d_\rH=0$, 
is defined by the rule, 
\begin{align*} 
	d^q_\rH=d_\rH\big|_{\Om^q_\rH(\bCE)} : \Om^q_\rH(\bCE)&\to\Om^{q+1}_\rH(\bCE), \\ 
	      \bar\om_{\mu_1\dots\mu_q}\cdot d\bar x^{\mu_1}\w\dots\w d\bar x^{\mu_q}
	       &\mapsto 
	       \bar D_{[\mu_0}\bar\om_{\mu_1\dots\mu_q]}\cdot 
	       d\bar x^{\mu_0}\w\dots\w d\bar x^{\mu_q},  
\end{align*}
where $d^{q+1}_\rH\com d^q_\rH=0$, $q\in\bbZ$. 

The $\bbZ$-graded linear space 
$H(\Om_\rH(\bCE);d_\rH)=\oplus_{q\in\bbZ}H^q_\rH(\bCE)$ of {\it cohomologies 
of the differential algebra} $(\cA(\bCE);\fD_\rH(\bCE))$ is defined 
in the usual way, 
$H(\Om_\rH(\bCE);d_\rH)=\ke d_\rH\big/\im d_\rH$, 
$H^q_\rH(\bCE)=\ke d^q_\rH\big/\im d^{q-1}_\rH$, $q\in\bbZ$. 

To move further, in particular to calculate the cohomology spaces, 
we introduce global coordinates on the solution manifold $\bCE$. 
Namely, the linear space $\bbR^\rM_\bbI$ we split as 
$$
	\bbR^\rM_\bbI=\bbR^1_\bbI\times\bbR^\rN_\bbI
	=\bbR^1_{\bbI_0}\times\bbR^1_{\bbI'_0}\times\bbR^\rN_\bbI
$$ 
(see Subsection \ref{NT}). Then, the global coordinates on $\bCE$ are defined 
by the isomorphism of the linear spaces 
\begin{align*}
	 \bCE=\big\{(x^\mu,u^\mu_\ri,p_\ri)\in\bB \ \big| \ u^\mu_{\ri+(\mu)}=0\big\}
	&\to \rX\times\bbR^1_{\bbI_0}\times\bbR^\rN_\bbI\times\bbR_\bbI, \\ 
		(x^\mu,u^1_\ri,u^{\al}_\ri,p_\ri)
	&\mapsto (x^\mu,u^1_{\ri_0},u^{\al}_\ri,p_\ri),   
\end{align*}
with the inverse mapping 
\begin{align*} 
	\rX\times\bbR^1_{\bbI_0}\times\bbR^\rN_\bbI\times\bbR_\bbI 
	&\to \big\{(x^\mu,u^\mu_\ri,p_\ri)\in\bB \ \big| \ u^\mu_{\ri+(\mu)}=0\big\}=\bCE, \\ 
	(x^\mu,u^1_{\ri_0},u^\al_\ri,p_\ri)
	&\mapsto(x^\mu,u^1_\ri,u^\al_\ri,p_\ri),   	
\end{align*}
where $u^1_{\ri'_0}=-u^\be_{\ri'_0-(1)+(\be)}$, $\mu=(1,\al)\in\rM$, $\ri_0\in\bbI_0$, 
$\al,\be\in\rN$, $\ri'_0\in\bbI'_0$.
In these global coordinates we have: 
\begin{itemize} 
	\item 
		$\bCE=\rX\times\bbR^1_{\bbI_0}\times\bbR^\rN_\bbI\times\bbR_\bbI$, 
		\qquad $\cA(\bCE)=\cC^\infty_{\fin}(\bCE)$; 
	\item 
		$\fD_\rV(\bCE)=\{\ze=\ze^1_{\ri_0}\p_{u^1_{\ri_0}}
		+\ze^\al_\ri\p_{u^\al_\ri}+\ze_\ri\p_{p_\ri}\mid 
		\ze^1_{\ri_0},\ze^\al_\ri,\ze_\ri\in\cA(\bCE)\}$; 
	\item 
		$D_\mu=\p_{x^\mu}+u^1_{\ri_0+(\mu)}\p_{u^1_{\ri_0}}
			+u^\al_{\ri+(\mu)}\p_{u^\al_\ri}+p_{\ri+(\mu)}\p_{p_\ri}$,  
			\quad $\mu\in\rM$, \\
		 where \ $u^1_{\ri_0+(1)}=-u^\al_{\ri_0+(\al)}$; 
	\item \label{SymCE}
		$\Sym(\cA(\bCE),\fD_\rH(\bCE))
		=\{\ze\!=\!\ev_\rf\in\fD_\rV(\bCE)\mid [D_\mu,\ev_\rf]\!=\!0, \ 
		\mu\in\rM\}$,  where 
		\begin{itemize} 
		\item 
			$\rf=(f^\mu,f)\in\cA^\rM(\bCE)\times\cA(\bCE)$, 
			\quad $D_\mu f^\mu=0$,
		\item
		$\ev_\rf=D_{\ri_0}f^1\cdot\p_{u^1_{\ri_0}}+D_\ri f^\al\cdot\p_{u^\al_\ri}
			+D_\ri f\cdot\p_{p_\ri}$; 
		\end{itemize}  
	\item 
		$\Om^q_\rH(\bCE)=\{\om=\om_{\mu_1\dots\mu_q}\cdot 
			dx^{\mu_1}\w\dots\w dx^{\mu_q}\mid 
			\om_{\mu_1\dots\mu_q}\in\cA(\bCE), \text{+ s-s}\}$, $1\le q\le m$. 
\end{itemize} 

In more detail, here the equation $D_\mu f^\mu=0$ takes the form 
\begin{equation}\label{ce2} 
	D_\mu f^\mu=\p_{x^\mu}f^\mu+u^1_{\ri_0+(\mu)}\p_{u^1_{\ri_0}}f^\mu  
		+u^\al_{\ri+(\mu)}\p_{u^\al_\ri}f^\mu 
		+p_{\ri+(\mu)}\p_{p_\ri}f^\mu =0. 
\end{equation} 
Let us suppose the function $f=(f^\mu)$ has the $\bp$-order $r=\max_{\mu\in\rM}\ord_\bp f^\mu$, then the term $\p_{p_{\ka,\ri}}f^\mu\cdot p_{\ka,\ri+(\mu)}$ 
will be of the $\bp$-order $r+1$, while all other terms of the equation (\ref{ce2}) will have 
$\bp$-orders $\le r$. This leads to the conclusion $\p_{p_{\ka,\ri}}f^\mu=0$, $|\ri|=r$, hence
the $\bp$-order of the function $f$ is $\le r-1$. 
By induction we get $f^\mu=f^\mu(x,\bu)$, $\mu\in\rM$, i.e., $f$ does not depend on 
the variable $\bp\in\bbR_\bbI$, and we get the following statement. 
\begin{prop} \label{sce} 
In the above setting the equation (\ref{ce2}) is reduced to 
\begin{equation} 
	\p_{x^\mu}f^\mu+u^1_{\ri_0+(\mu)}\p_{u^1_{\ri_0}}f^\mu  
		+u^\al_{\ri+(\mu)}\p_{u^\al_\ri}f^\mu=0, 
		\quad f^\mu(x,\bu)\in\cC^\infty_{\fin}(\rX\times\bbR^1_{\bbI_0}\times\bbR^\rN_\bbI). 
\end{equation} 
\end{prop}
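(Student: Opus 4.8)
The plan is to argue by descending induction on the $\bp$-order $r$ of the tuple $f=(f^\mu)$, exactly as sketched in the paragraph preceding the statement, but carrying the induction through to its conclusion. First I would make the base of the induction precise: for arbitrary $f^\mu\in\cA(\bCE)$ set $r=\max_{\mu\in\rM}\ord_\bp f^\mu$ and suppose $r\ge 1$; the goal of one inductive step is to show that in fact every $f^\mu$ has $\bp$-order $\le r-1$, so that after finitely many steps we reach $r=0$, i.e.\ $f^\mu=f^\mu(x,\bu)$ with $\bu\in\bbR^1_{\bbI_0}\times\bbR^\rN_\bbI$ — and crucially $\bu$ ranges only over the \emph{reduced} coordinates, since on $\bCE$ the variables $u^1_{\ri'_0}$ are not independent but equal $-u^\be_{\ri'_0-(1)+(\be)}$.

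The key computation is to isolate, in equation~(\ref{ce2}), the part of top $\bp$-order. Write a generic pressure variable as $p_{\ri}$ with $|\ri|=r$ among those on which some $f^\mu$ genuinely depends. In the sum $p_{\ri+(\mu)}\p_{p_\ri}f^\mu$ the variables $p_{\ri+(\mu)}$ have $|\ri+(\mu)|=r+1$; since all the remaining terms of~(\ref{ce2}) — namely $\p_{x^\mu}f^\mu$, $u^1_{\ri_0+(\mu)}\p_{u^1_{\ri_0}}f^\mu$, $u^\al_{\ri+(\mu)}\p_{u^\al_\ri}f^\mu$ — involve only $\bp$-variables of order $\le r$ (differentiation in $x,u$ does not raise $\bp$-order, and multiplication by a $u$-variable does not touch $\bp$-variables at all), the $\bp$-order-$(r+1)$ component of~(\ref{ce2}) must vanish separately. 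That component is $\sum_{\mu}\sum_{|\ri|=r}p_{\ri+(\mu)}\,\p_{p_{\ri}}f^\mu$. Regarding the $p_{\rk}$ with $|\rk|=r+1$ as independent linear coordinates on the top stratum, and noting that $\p_{p_{\ri}}f^\mu$ has $\bp$-order $\le r-1<r+1$ so is "constant" in those coordinates, one reads off that the coefficient of each $p_{\rk}$ vanishes; since $p_{\rk}=p_{\ri+(\mu)}$ can arise from several pairs $(\ri,\mu)$, this gives $\sum_{(\ri,\mu):\,\ri+(\mu)=\rk}\p_{p_{\ri}}f^\mu=0$ for every $\rk$ with $|\rk|=r+1$. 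Taking, for a fixed index $\ka\in\rM$ and fixed multi-index $\ri$ with $|\ri|=r$, the choice $\rk=\ri+(\ka)$ and observing the standard trick of evaluating at points where only the $\mu=\ka$ contribution survives (or simply choosing $\ri$ so that $\ri+(\mu)=\ri+(\ka)$ forces $\mu=\ka$), one concludes $\p_{p_{\ri}}f^\ka=0$ for all $|\ri|=r$, i.e.\ $\ord_\bp f^\ka\le r-1$. This is the step the author abbreviates as "$\p_{p_{\ka,\ri}}f^\mu=0$".

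The main obstacle — and the only place where care is genuinely needed — is the bookkeeping in passing from "the coefficient of $p_{\rk}$ vanishes" to "$\p_{p_\ri}f^\mu=0$ for each individual $(\ri,\mu)$", because the map $(\ri,\mu)\mapsto\ri+(\mu)$ is not injective, so the vanishing of the sum over a fibre does not immediately split. The clean way around this is to differentiate the relation $\sum_{(\ri,\mu):\,\ri+(\mu)=\rk}\p_{p_\ri}f^\mu=0$ once more with respect to a well-chosen $p_{\rj}$, $|\rj|=r$, or equivalently to compare the relations for different $\rk$; since all the $\p_{p_\ri}f^\mu$ involved are of $\bp$-order $\le r-1$ while we are free to vary the order-$r$ variables, a short linear-algebra argument over the index set forces every summand to vanish individually. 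Once the inductive step is established, the reduction $f^\mu=f^\mu(x,\bu)$ is immediate, and substituting back into~(\ref{ce2}) (where now all $p$-dependent terms have dropped out) yields exactly the displayed equation of the Proposition, with $f^\mu\in\cC^\infty_{\fin}(\rX\times\bbR^1_{\bbI_0}\times\bbR^\rN_\bbI)$. I would then remark that the same descending-order argument applies verbatim with $\bp$ replaced by any coordinate block not appearing on the right-hand side of the structure relations, which is why the author says it is "straightforward but tiresome".
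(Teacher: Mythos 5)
You have correctly located the delicate point — the map $(\ri,\mu)\mapsto\ri+(\mu)$ is not injective, so equating to zero the coefficient of each top\--order variable $p_\rk$, $|\rk|=r+1$, yields only the fibre sums $\sum_{\mu\,:\,k^\mu\ge1}\p_{p_{\rk-(\mu)}}f^\mu=0$, not the individual vanishing $\p_{p_\ri}f^\mu=0$ — but the step you propose to bridge it ("differentiate once more with respect to a well-chosen $p_\rj$", "a short linear-algebra argument over the index set forces every summand to vanish") is never exhibited, and in fact no such argument exists. The fibre relations say only that the array $T^\mu_\ri=\p_{p_\ri}f^\mu$, $|\ri|=r$, has vanishing "discrete divergence", and that system has nonzero solutions: already for $r=1$ the relations state precisely that the matrix $A^{\mu\nu}=\p_{p_{(\nu)}}f^\mu$ is antisymmetric, which does not force $A=0$. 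Your parenthetical remark (choose $\ri$ so that $\ri+(\mu)=\ri+(\ka)$ forces $\mu=\ka$) does not help, because the fibre over $\rk=\ri+(\ka)$ also contains pairs $(\ri',\mu)$ with $\ri'\ne\ri$; and "evaluating at points where only one contribution survives" is unavailable, since all summands are functions evaluated at the same point of $\bCE$.

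Moreover, the gap cannot be closed by better bookkeeping, because the inductive step is simply not true without further hypotheses: take $f^1=p_{(2)}$, $f^2=-p_{(1)}$, all other components zero (more generally $f^\mu=D_\nu\Psi^{\mu\nu}$ with $\Psi^{\mu\nu}=-\Psi^{\nu\mu}\in\cA(\bCE)$ depending on $\bp$). These lie in $\cA(\bCE)$, depend on $\bp$, and satisfy $D_\mu f^\mu=p_{(1)+(2)}-p_{(2)+(1)}=0$ identically, i.e.\ they solve (\ref{ce2}) while their top\--order $\bp$\--derivatives cancel across the fibre exactly in the antisymmetric pattern above. Be aware that the paper's own two\--line argument preceding the Proposition makes the same silent leap from the fibre sums to "$\p_{p_{\ka,\ri}}f^\mu=0$", so you will not find the missing linear algebra there; as written, both your proposal and the printed argument establish only the fibre relations, and the asserted reduction to $f^\mu=f^\mu(x,\bu)$ would require either excluding such divergence\--free "curl-type" solutions or adding hypotheses to the statement.
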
 
\begin{rem}\label{rce} 
If we want to have symmetries that really depend on the variable 
$\bp\in\bbR_\bbI$, we must narrow down to a subvariety of the space $\bCE$. 
Exactly such situation arises when studying the Navier-Stokes equations. 
\end{rem} 

To calculate the cohomology spaces $H^q_\rH(\bCE)$ we introduce 
the auxiliary complex $\{\Te^q,d^q_\Te\mid q\in\bbZ\}$, 
where \label{AC1}
\begin{itemize} 
	\item 
		$\Te^q=\{\te^q=\te_{\al_1\dots\al_q}\!\!\cdot 
		dx^{\al_1}\w\dots\w dx^{\al_q}\mid \al_1,\dots,\al_q\in\rN, 
		\te_{\al_1\dots\al_q}\in\cA(\bCE)\}$; 
	\item 
		$d^q_\Te : \Te^q\to\Te^{q+1}, \  
		\te_{\al_1\dots\al_q}\!\!\cdot dx^{\al_1}\w\dots\w dx^{\al_q}
		\mapsto 
		D_{[\al_0}\te_{\al_1\dots\al_q]}\cdot 
		dx^{\al_0}\w\dots\w dx^{\al_q}$; 
	\item 
		$H^q(\Te)=\ke d^q_\Te\big/\im d^{q-1}_\Te$.  
\end{itemize} 
There is defined the commutative diagram  
\begin{equation*} 
\begin{diagram}[2em] 
 &    &0                 &\rTo^{\io^{-1}} &\Om^0           &\rTo^{\pi^0}&\Te^0             &\rTo&0 \\ 
 &    &\dTo^{d^{-1}_\Te} &                &\dTo^{d^0_\rH}  &            &\dTo^{d^0_\Te}    &    &  \\ 
 &    &\vdots            &                &\vdots          &            &\vdots            &    &  \\ 
 &    &\dTo^{d^{q-2}_\Te}&                &\dTo^{d^{q-1}\rH}&           &\dTo^{d^{q-1}_\Te}&    &  \\ 
0&\rTo&\Te^{q-1}         &\rTo^{\io^{q-1}}&\Om^q           &\rTo^{\pi^q}&\Te^q             &\rTo&0 \\ 
 &    &\dTo^{d^{q-1}_\Te}&                &\dTo^{d^{q}_\rH}&            &\dTo^{d^{q}_\Te}  &    &  \\
 &    &\vdots            &                &\vdots          &            &\vdots            &    &  \\     
 &    &\dTo^{d^{m-2}_\Te}&                &\dTo^{d^{m-1}_\rH}&          &\dTo^{d^{m-1}_\Te}&    &  \\  
0&\rTo&\Te^{m-1}         &\rTo^{\io^{m-1}}&\Om^m           &\rTo^{\pi^m}&0                 &    &  \\ 
\end{diagram}	
\end{equation*}
where 
\begin{itemize} 
	\item 
		$\Om^q=\Om^q_\rH(\bCE)$, \quad $\Om^0=\Te^0=\cA(\bCE)$; 
	\item 
		$\Om^q=dx^1\w\Te^{q-1}\oplus_{\cA(\bCE)}\Te^q$; 
	\item 
		$d_\rH(dx^1\w\te^{q-1}+\te^q)=dx^1\w(D_1\te^q-d_\Te\te^{q-1})+d_\Te\te^q$; 
	\item 
		$D_1\te^q=D_1(\te_{\al_1\dots\al_q}\cdot dx^{\al_1}\w\dots\w dx^{\al_q})
			=(D_1\te_{\al_1\dots\al_q})\cdot dx^{\al_1}\w\dots\w dx^{\al_q}$; 
	\item 
		$\io^q : \Te^q\to\Om^{q+1}$, \quad $\te^q\mapsto\om^{q+1}=(-1)^q dx^1\w\te^q+0$; 
	\item 
		$\pi^q : \Om^q\to\Te^q$, \quad $\om^q=dx^1\w\te^{q-1}+\te^q\mapsto\te^q$; 
\end{itemize} 

According to the general results of homology theory (see, for example, \cite{M}) 
this diagram defines the long exact sequence of the cohomology spaces 
\begin{align*} 
	&0\to H^0_\rH(\bCE)\to H^0(\Te)\xrightarrow{D^0_1}H^0(\Te)\to H^1_\rH(\bCE)
	\to H^1(\Te)\xrightarrow{D^1_1}\dots \\
	     &\dots\xrightarrow{D^{m-2}_1}H^{m-2}(\Te)\to H^{m-1}_\rH(\bCE)\to H^{m-1}(\Te) 
	\xrightarrow{D^{m-1}_1}H^{m-1}(\Te)\to \\
	     &\to H^m_\rH(\bCE)\to 0,  
\end{align*}
where $D^q_1 : H^q(\Te)\to H^q(\Te)$, 
$$
	[\te^q]=\te^q+\im d^{q-1}_\Te\mapsto D^q_1[\te^q]=[D_1\te^q]=D_1\te^q+\im d^{q-1}_\Te. 
$$
By Theorem \ref{T1}, we have $H^q(\Te)=0$ for $q\ne0$ and $q\ne m-1$, while 
\begin{itemize} 
	\item 
		$H^0(\Te)=\{\phi(x^1)\in\cC^\infty(\bbR)\}$; 
	\item   
		$H^{m-1}(\Te)=\cH(\Te)=\{\chi\in\cF(\Te)\mid \chi_*=\chi^*\}$,  \\  
		\phantom{$H^{m-1}(\Te)=$}$\cF(\Te)=\cA_1(\bCE)\times\!\cA^{\bbZ^1_+}_{\rN}(\bCE)
		\times\!\cA^{\bbZ^1_+}(\bCE)$,  
\end{itemize} 
where the isomorphism $\de : H^{m-1}(\Te)\simeq\cH(\Te)$ is defined as follows: 
\begin{align*} 
	&\Te^{m-1}\ni\te=L\cdot dx^2\w\dots\w dx^m\mapsto
		\de L=(\de_{u^1_0}L,\de_{u^\al_{i^1}}L,\de_{p_{i^1}} L), \\
	&\de_{u^1_0}L=(-D)_\rj\p_{u^1_{0,\rj}}L, 
	\quad \de_{u^\al_{i^1}}L=(-D)_\rj\p_{u^\al_{i^1,\rj}}L, 
	\quad \de_{p_{i^1}} L=(-D)_\rj\p_{p_{i^1,\rj}}L,   
\end{align*} 
here $\rj\in\bbJ$, $\al\in\rN$, $i^1\in\bbZ_+$, remind, $\ri_0=(0,\rj)$, $\ri=(i^1,\rj)$ 
(see Subsection \ref{NT}), so $u^1_{\ri_0}=u^1_{0,\rj}$, $u^\al_\ri=u^\al_{i^1,\rj}$, 
$p_\ri=p_{i^1,\rj}$. 

The above long exact sequence defines the short exact sequence 
$$
	0\to H^0_\rH(\bCE)\to\ke D^0_1\to 0, \quad\text{hence,}\quad H^0_\rH(\bCE)=\bbR. 
$$ 
In the same way, we have the exact sequence 
$$ 
	0\to\im D^0_1\to H^0(\Te)\to H^1_\rH(\bCE)\to 0 \quad (H^1(\Te)=0),
$$ 
hence, $H^1_\rH(\bCE)=H^0(\Te)\big/\im D^0_1=0$. 
Further, for $1\le q\le m-2$ we have $H^q(\Te)=0$ and 
$H^{q-1}(\Te)\to H^q_\rH(\bCE)\to H^q(\Te)$, so $H^q_\rH(\bCE)=0$ for $2\le q\le m-2$. 
Now, for $q=m-1$ we have 
$$	
	\dots H^{m-2}(\Te)\to H^{m-1}_\rH(\bCE)\to H^{m-1}(\Te)\xrightarrow{D^{m-1}_1}\dots 
$$
Here, $H^{m-2}(\Te)=0$, hence, 
$0\to H^{m-1}_\rH(\bCE)\to\ke D^{m-1}_1\to 0$ is the exact sequence, i.e., $H^{m-1}_\rH(\bCE)=\ke D^{m-1}_1$ 
(note, cohomologies are defined up to isomorphisms). 

\begin{lemma}\label{L1}{\rm(See, for example, \cite{Z1}, \cite{Z2})} 
There is defined the commutative diagram 
\begin{equation*}
\begin{diagram}[2em] 
	0&\rTo&H^{m-1}(\Te)    &\rTo^\de&\cH(\Te)          &\rTo&0 \\
     &	  &\dTo^{D^{m-1}_1}&        &\dTo^{\tilde{D_1}}&    &  \\ 
	0&\rTo&H^{m-1}(\Te)    &\rTo^\de&\cH(\Te)          &\rTo&0,      
\end{diagram} 
\end{equation*}  
in particular, $\de\com D^{m-1}_1=\tilde D_1\com\de$, 
where $D_1=\p_{x^1}+\ev'_{\rf}, \ \tilde{D_1}=D_1+\rf^*$, 
$$ 
	\ev'_{\rf}=D_\rj f^1_0\cdot\p_{u^1_{0,\rj}}+D_\rj f^\al_{i^1}\cdot\p_{u^\al_{i^1,\rj}} 
	+D_\rj f_{i^1}\cdot\p_{p_{i^1,\rj}}, 
$$  
$\rj\in\bbJ$, \ $\al\in\rN$, \ $i^1\in\bbZ_+$. 
\end{lemma}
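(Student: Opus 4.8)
The two rows of the diagram are exact simply because the variational derivative $\de$ was exhibited, just before the lemma, as an \emph{isomorphism} $H^{m-1}(\Te)\xrightarrow{\,\simeq\,}\cH(\Te)$ — the reduced form of Theorem~\ref{T1} for the differential algebra $(\cA(\bCE),\fD_\rH(\bCE))$ sliced along the distinguished variable $x^1$. So the only thing to prove is the commutativity $\de\com D^{m-1}_1=\tilde D_1\com\de$. Since $\de$ is already known to be well defined on $H^{m-1}(\Te)$ (it annihilates $\im d^{m-2}_\Te$), and $D^{m-1}_1$ is by definition the operator induced on $H^{m-1}(\Te)$ by $\te\mapsto D_1\te$, it suffices to check, for a single representative $\te=L\cdot dx^2\w\dots\w dx^m$ of a cohomology class, the pointwise identity $\de(D_1L)=\tilde D_1(\de L)$ in $\cF(\Te)$.

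First I would split $D_1$, as in the statement, into its explicit and vertical parts, $D_1=\p_{x^1}+\ev'_\rf$, where $\ev'_\rf$ is the $\rN$-prolongation of the vertical field whose generating tuple $\rf$ is read off from the action of $D_1$ on the coordinates $u^1_{0,\rj},\,u^\al_{i^1,\rj},\,p_{i^1,\rj}$ of $\bCE$; it is precisely here that the divergence-free relation $u^1_{\ri_0+(1)}=-u^\al_{\ri_0+(\al)}$ enters, forcing the component $f^1_0$ (essentially the $\rN$-divergence of $\bu$) to be nonzero. Next, $\p_{x^1}$ commutes with every operator $D_\rj$, $\rj\in\bbJ$, and with every derivation $\p_{u^1_{0,\rj}},\p_{u^\al_{i^1,\rj}},\p_{p_{i^1,\rj}}$ occurring in the Euler operators $\de_{u^1_0},\de_{u^\al_{i^1}},\de_{p_{i^1}}$; hence, componentwise, $\de(\p_{x^1}L)=\p_{x^1}(\de L)$.

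For the vertical part the plan is to invoke the classical identity of the formal calculus of variations (\cite{Z1}, \cite{Z2}; cf.\ also \cite{O}): for a vertical evolutionary field with generating tuple $\rf$ one has $\de(\ev'_\rf L)=\ev'_\rf(\de L)+\rf^*(\de L)$, where $\rf^*$ is the formal adjoint of the Fréchet (linearization) operator of $\rf$, acting on source tuples. Granting this, adding the two contributions yields
\[
 \de(D_1L)=\p_{x^1}(\de L)+\ev'_\rf(\de L)+\rf^*(\de L)=(D_1+\rf^*)(\de L)=\tilde D_1(\de L),
\]
which is the required commutativity. Finally, since $\de$ is bijective and $D^{m-1}_1$ preserves $H^{m-1}(\Te)$, it follows automatically that $\tilde D_1=D_1+\rf^*$ carries $\cH(\Te)$ into itself, i.e.\ preserves the Helmholtz self-adjointness condition $\chi_*=\chi^*$; this settles the right-hand vertical arrow of the diagram.

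The genuinely laborious step — the main obstacle — is the verification of the displayed variational identity $\de(\ev'_\rf L)=\ev'_\rf(\de L)+\rf^*(\de L)$: one expands the integrand of each Euler operator by the chain and Leibniz rules and then integrates by parts repeatedly, so that the terms in which a total derivative falls on the coefficients of $\ev'_\rf$ recombine into $\rf^*(\de L)$ while those in which it falls on $L$ recombine into $\ev'_\rf(\de L)$. This is exactly the kind of straightforward-but-tiresome computation the paper has chosen to suppress; apart from it, and from writing out the constraint-modified component $f^1_0$ correctly, the argument is purely formal.
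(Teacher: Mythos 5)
Your argument is correct and is essentially the paper's own route: the paper gives no proof of Lemma~\ref{L1}, referring instead to \cite{Z1}, \cite{Z2}, where the content is exactly the reduction you make — exactness of the rows from the isomorphism $\de: H^{m-1}(\Te)\simeq\cH(\Te)$ of Theorem~\ref{T1}, and commutativity from the classical identity $\de(\ev'_\rf L)=\ev'_\rf(\de L)+\rf^*(\de L)$ for evolutionary fields, applied to the split $D_1=\p_{x^1}+\ev'_\rf$ with $\rf$ read off from the divergence-free constraint. Your observation that stability of $\cH(\Te)$ under $\tilde D_1$ then comes for free is also consistent with how the paper uses the lemma.
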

In our setting, 
$\rf=(f^1_0=-u^\al_{(\al)}, \ f^\al_{i^1}=u^\al_{i^1+1,0}, \ 
			f_{i^1}=p_{i^1+1,0})$. 
Hence, here 
$\tilde{D_1},\rf^*\in\End_\bbR(\cA_1(\bCE)\times\cA^{\bbZ^1_+}_{\rN}(\bCE)\times
\cA^{\bbZ^1_+}(\bCE))$, 
$$ 
	\chi=(\chi^0_1,\chi^{i^1}_{\al},\chi^{i^1})\mapsto \rf^*\chi
	=((\rf^*\chi)^0_1,(\rf^*\chi)^{i^1}_{\al},(\rf^*\chi)^{i^1}, 
$$
where 
$$ 
	(\rf^*\chi)^0_1=0, \quad 
	(\rf^*\chi)^{i^1}_\al=\de^{i^1}_0 D_\al\chi^0_1+\chi^{i^1-1}_\al, \quad 
	(\rf^*\chi)^{i^1}=\chi^{i^1-1}.
$$
The system $(D_1+\rf^*)\chi=0$ of the equations for defining an unknown function   
$\chi\in\cA_1(\bCE)\times\cA^{\bbZ^1_+}_{\rN}(\bCE)\times\cA^{\bbZ^1_+}(\bCE)$ 
with a finite number of the components $\chi^0_1,\chi^{i^1}_\al,\chi^{i^1}\ne0$, 
reduces to  
$$ 
	D_1\chi^0_1=0, \quad D_1\chi^{i^1}_\al+\de^{i^1}_0 D_\al\chi^0_1+\chi^{i^1-1}_\al=0, 
	\quad D_1\chi^{ i^1}+\chi^{i^1-1}=0, 
$$ 
and has the general solution: $\chi=(\chi^0_1,0,0)$, $\chi^0_1\in\bbR$. 
Now, we may take $\de u^1=(1,0,0)$, hence the linear space $\ke D^{m-1}_1$ is one-dimensional with 
the basis $[u^1d_1x]$ (remind, $d_\mu x=(-1)^{\mu-1}dx^1\w\dots\ck{dx^\mu}\dots\w dx^m$, 
the term $\ck{dx^\mu}$ is omitted, so $dx^\nu\w d_\mu x=\de^\nu_\mu\cdot d^m x$, $\mu,\nu\in\rM$), i.e., $\ke D^{m-1}_1=\bbR\cdot[u^1d_1x]$ (cf., \cite{Z1}, \cite{Z2}).  
The condition $\chi_*=\chi^*$ is trivially fulfilled, hence the following statement is valid. 
\begin{prop} 
The cohomology space $H^{m-1}_\rH(\bCE)=\bbR\cdot[u^\mu d_\mu x]$. 
\end{prop}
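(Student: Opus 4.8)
The plan is to transport the description $\ke D^{m-1}_1=\bbR\cdot[u^1 d_1 x]$, obtained above, back to $H^{m-1}_\rH(\bCE)$ through the connecting isomorphism of the long exact sequence, by exhibiting $u^\mu d_\mu x$ as the $d_\rH$-closed lift of the generator $u^1 d_1 x$. Two facts are already in place. First, since $H^{m-2}(\Te)=0$, the long exact sequence gives an isomorphism $\pi^{m-1}_* : H^{m-1}_\rH(\bCE)\xrightarrow{\ \sim\ }\ke D^{m-1}_1\subset H^{m-1}(\Te)$, induced by $\om\mapsto\pi^{m-1}\om$. Second, Lemma \ref{L1} identifies $\ke D^{m-1}_1$ with the kernel of $D_1+\rf^*$ on $\cH(\Te)$, whose general solution is $\chi=(\chi^0_1,0,0)$ with $\chi^0_1\in\bbR$; since $\de u^1=(1,0,0)$, this kernel is the line $\bbR\cdot[u^1 d_1 x]$. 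It therefore remains only to produce a $d_\rH$-cocycle $\om\in\Om^{m-1}_\rH(\bCE)$ with $\pi^{m-1}\om=u^1 d_1 x$.

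I would take $\om=u^\mu d_\mu x$ and check the two needed properties. For closedness: $d_\rH(u^\mu d_\mu x)=(D_\mu u^\mu)\cdot d^m x=u^\mu_{(\mu)}\cdot d^m x$, which vanishes in $\Om^m_\rH(\bCE)$ because $u^\mu_{(\mu)}=\CE_0\in\cI$; in the global coordinates this is seen directly, as $D_1 u^1_0=-u^\al_{(\al)}$ cancels $\sum_{\al\in\rN}D_\al u^\al=\sum_{\al\in\rN}u^\al_{(\al)}$. For the projection: in the splitting $\Om^{m-1}_\rH(\bCE)=dx^1\w\Te^{m-2}\oplus_{\cA(\bCE)}\Te^{m-1}$, each $d_\al x$ with $\al\in\rN$ carries a $dx^1$ factor, so $\sum_{\al\in\rN}u^\al d_\al x\in dx^1\w\Te^{m-2}$, while $d_1 x=dx^2\w\dots\w dx^m$ has no $dx^1$ factor, so $u^1 d_1 x\in\Te^{m-1}$; hence $\pi^{m-1}(u^\mu d_\mu x)=u^1 d_1 x$. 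Combining, $\pi^{m-1}_*[u^\mu d_\mu x]=[u^1 d_1 x]$ is a generator of $\ke D^{m-1}_1$, and since $\pi^{m-1}_*$ is an isomorphism, $[u^\mu d_\mu x]$ is nonzero and spans, i.e.\ $H^{m-1}_\rH(\bCE)=\bbR\cdot[u^\mu d_\mu x]$.

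The step I expect to carry the real content is the closedness check together with the bookkeeping attached to it: $u^1 d_1 x$ by itself is \emph{not} $d_\rH$-closed on $\bCE$, because the continuity constraint kills only the total divergence $u^\mu_{(\mu)}$ and not the single term $u^1_{(1)}=-\sum_{\al\in\rN}u^\al_{(\al)}$. The point of the argument is exactly that adjoining the forms $u^\al d_\al x$, $\al\in\rN$, cancels this defect while, being supported on $dx^1\w\Te^{m-2}$, leaving the $\Te^{m-1}$-component read off by $\pi^{m-1}$ unchanged. I would also keep the sign convention $d_\mu x=(-1)^{\mu-1}dx^1\w\dots\ck{dx^\mu}\dots\w dx^m$ straight, so that $dx^\nu\w d_\mu x=\de^\nu_\mu\,d^m x$ and hence the identity $d_\rH(u^\mu d_\mu x)=(D_\mu u^\mu)\,d^m x$ holds on the nose.
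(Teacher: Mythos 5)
Your proposal is correct and follows essentially the same route as the paper: the isomorphism $H^{m-1}_\rH(\bCE)\simeq\ke D^{m-1}_1$ from the long exact sequence (using $H^{m-2}(\Te)=0$), Lemma \ref{L1} plus the solution $\chi=(\chi^0_1,0,0)$, $\chi^0_1\in\bbR$, of $(D_1+\rf^*)\chi=0$, and the identification of the generator with $[u^\mu d_\mu x]$. The only difference is that you spell out the final step the paper merely asserts --- that $u^\mu d_\mu x$ is $d_\rH$-closed on $\bCE$ and projects under $\pi^{m-1}$ to $u^1 d_1 x$ --- which is a correct and worthwhile verification, not a change of method.
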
			
Note, here $H^{m-1}_\rH(\bCE)\ni[u^\mu d_\mu x]\mapsto[u^1d_1x]\in\ke D^{m-1}_1$. 

\begin{rem} 
The cohomology $[u^\mu d_\mu x]$ is generated by the constraint $\CE=D_\mu u^\mu=0$.   
\end{rem}

At last, we have $H^{m-1}(\Te)\xrightarrow{D^{m-1}_1}H^{m-1}(\Te)\to H^m_\rH(\bCE)\to0$, i.e., 
$$ 
	0\to\im D^{m-1}_1\to H^{m-1}(\Te)\to H^m(\bCE)\to 0, 
$$
so $H^m_\rH(\bCE)=H^{m-1}(\Te)\big/\im D^{m-1}_1$
Thus, the following theorem holds. 
\begin{theorem}\label{TCE} 
The linear spaces of the cohomologies of the differential algebra $(\cA(\bCE);\fD_\rH(\bCE))$ are 
\begin{equation*} 
	H^q_\rH(\bCE)=\begin{cases} 
	0,                                     & q<0,1\le q\le m-2,q>m; \\ 
	\bbR,                                  & q=0; \\ 
	\ke D^{m-1}_1=\bbR\cdot[u^\mu d_\mu x],& q=m-1; \\ 
	H^{m-1}(\Te)\big/\im D^{m-1}_1,        & q=m; 
	            \end{cases} 
\end{equation*} 
where $H^{m-1}(\Te)\big/\im D^{m-1}_1=\cH(\Te)\big/\im\tilde{D_1}$. 
\end{theorem}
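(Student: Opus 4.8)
\textbf{Proof proposal for Theorem \ref{TCE}.}
The plan is to assemble the statement from the long exact cohomology sequence attached to the short exact sequence of complexes $0\to\Te^{\bullet-1}\xrightarrow{\io}\Om^\bullet\xrightarrow{\pi}\Te^\bullet\to0$ exhibited in the commutative diagram above, feeding in the values of $H^q(\Te)$ supplied by Theorem \ref{T1}. First I would recall that, applying the main theorem of the formal calculus of variations to the differential subalgebra carrying the complex $\Te^\bullet$ (it has the $m-1$ horizontal derivations $D_2,\dots,D_m$), one gets $H^q(\Te)=0$ for $q\ne0,m-1$, while $H^0(\Te)=\{\phi(x^1)\in\cC^\infty(\bbR)\}$ and $H^{m-1}(\Te)=\cH(\Te)$ with the explicit variational-derivative isomorphism $\de$ recorded just before the statement; the connecting maps of the long exact sequence are the operators $D^q_1:H^q(\Te)\to H^q(\Te)$ induced by $D_1$.

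Next I would chase the sequence degree by degree. Where $H^q(\Te)$ vanishes on both sides the group $H^q_\rH(\bCE)$ is squeezed to zero, giving $H^q_\rH(\bCE)=0$ for $2\le q\le m-2$; the fragment $0\to H^0_\rH(\bCE)\to\ke D^0_1\to0$ with $D^0_1=\p_{x^1}$ on $\cC^\infty(\bbR)$ gives $H^0_\rH(\bCE)=\bbR$; and the fragment $\im D^0_1\to H^0(\Te)\to H^1_\rH(\bCE)\to0$, together with surjectivity of $\p_{x^1}$ on $\cC^\infty(\bbR)$, gives $H^1_\rH(\bCE)=0$. At the top the only surviving fragments are $0\to H^{m-1}_\rH(\bCE)\to\ke D^{m-1}_1\to0$ and $0\to\im D^{m-1}_1\to H^{m-1}(\Te)\to H^m_\rH(\bCE)\to0$, so everything reduces to analysing the single operator $D^{m-1}_1$ on $H^{m-1}(\Te)=\cH(\Te)$.

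For that I would transport $D^{m-1}_1$ through the isomorphism $\de$ of Lemma \ref{L1} to $\tilde D_1=D_1+\rf^*$ acting on $\cH(\Te)\subset\cA_1(\bCE)\times\cA^{\bbZ^1_+}_\rN(\bCE)\times\cA^{\bbZ^1_+}(\bCE)$, with $\rf=(f^1_0=-u^\al_{(\al)},\ f^\al_{i^1}=u^\al_{i^1+1,0},\ f_{i^1}=p_{i^1+1,0})$. Writing $(D_1+\rf^*)\chi=0$ componentwise yields the triangular system $D_1\chi^0_1=0$, $D_1\chi^{i^1}_\al+\de^{i^1}_0D_\al\chi^0_1+\chi^{i^1-1}_\al=0$, $D_1\chi^{i^1}+\chi^{i^1-1}=0$; solving it over functions with only finitely many nonzero components, by descending induction on $i^1$ and using that $D_1$ has trivial kernel on nonconstant finite-order functions, forces $\chi=(\chi^0_1,0,0)$ with $\chi^0_1\in\bbR$, and the Helmholtz symmetry $\chi_*=\chi^*$ then holds automatically. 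Hence $\ke D^{m-1}_1$ is one-dimensional, spanned by the class corresponding to $\de u^1=(1,0,0)$, i.e. by $[u^\mu d_\mu x]$, which yields $H^{m-1}_\rH(\bCE)=\bbR\cdot[u^\mu d_\mu x]$; and the cokernel fragment identifies $H^m_\rH(\bCE)$ with $H^{m-1}(\Te)/\im D^{m-1}_1$, which under $\de$ is $\cH(\Te)/\im\tilde D_1$.

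The main obstacle is the degree $m-1$ computation: the rest is a formal diagram chase once Theorem \ref{T1} and Lemma \ref{L1} are available, but $\ke D^{m-1}_1$ has to be found by hand, and the clean route is to pass to $\tilde D_1$ and run the triangularity/finiteness argument on the explicit system above — the delicate point being to verify that the induction on the $\bu$- and $\bp$-orders really terminates, so that no solution depending genuinely on the jet variables survives.
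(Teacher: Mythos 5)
Your proposal is correct and follows essentially the same route as the paper: the auxiliary complex $\Te^\bullet$, the long exact sequence with connecting maps $D^q_1$, Theorem \ref{T1} for $H^q(\Te)$, transport via Lemma \ref{L1} to $\tilde D_1=D_1+\rf^*$, and the explicit solution of the triangular system in degree $m-1$. One small repair: the kernel of $D_1$ alone on $\cA(\bCE)$ is not trivial (e.g.\ any function of $x^2,\dots,x^m$ lies in it), so the descending induction should be justified purely by the finiteness of the nonzero components ($\chi^{i^1-1}=-D_1\chi^{i^1}$ vanishes once $\chi^{i^1}$ does), and the constancy of $\chi^0_1$ by the fact that the $i^1=0$ equation also yields $D_\al\chi^0_1=0$, so that $\chi^0_1$ is annihilated by all $D_\mu$, $\mu\in\rM$, hence lies in $\bbR$.
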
 

\subsection{An additional constraint.}\label{ADC} 

According to Remark \ref{rce}, to apply the algebraic analysis to the Navier-Stokes equations,  
explicitly containing dependence on the pressure $p$, we need to add an additional constraint 
in the space $\bCE$. Having in mind the specific form of these equations, 
we choose the constraint defined by the equation 
$$
	\PE=\De p+\na\bu\cdot\na\bu
	=\de^{\la\mu}\p_{x^\la}\p_{x^\mu}p+\p_{x^\la}u^\mu\cdot\p_{x^\mu}u^\la=0
$$ 
(see the integrability condition (\ref{NS4}), we took into account the equation
(\ref{NS2})). 
Namely, we set   
\begin{equation} 
	\bPE=\big\{(x,\bu,\bp)\in\bB \ \big| \ \PE_\ri=
	\De p_\ri+D_\ri(u^\la_{(\mu)}u^\mu_{(\la)})=0, \  
	\ri\in\bbI\big\}.   
\end{equation} 
where 
\begin{itemize} 
	\item   
		$\De=\de^{\la\mu}D_\la\com D_\mu
		=\de^{\la\mu}D_{(\la)+(\mu)}=\sum_\mu D_{2(\mu)}$, 
	\item 
		$D_\ri(u^\la_{(\mu)}u^\mu_{(\la)})
		=\sum_{\rk+\rl=\ri}\binom{\ri}{\rk}u^\la_{\rk+(\mu)}
		u^\mu_{\rl+(\la)}$.
\end{itemize}
The subspace $\bCPE=\bCE\cap\bPE\subset\bB$ generates the differential ideal 
$$ 
	\cJ=\big\{f\in\cA(\bB) \ \big| \ f\big|_{\bCPE}=0\big\}
	=\big\{f=g^\ri\cdot\CE_\ri+h^\ri\cdot\PE_\ri \ \big| \ g^\ri,h^\ri\in\cA(\bB)\big\}. 
$$ 
The factor-algebra $\cA(\bCPE)=\cA(\bB)\big/\cJ$ is the algebra of {\it smooth functions 
on the space} $\bCPE$. 

The Lie algebra $\fD(\bCPE)$ of {\it derivations of the algebra $\cA(\bCPE)$}, 
the Lie algebra $\Sym(\cA(\bCPE),\fD_\rH(\bCPE))$ of the {\it symmetries of the 
differential algebra} $(\cA(\bCPE),\fD_\rH(\bCPE))$, the $\bbZ$-graded $\cA(\bCPE)$-module 
$\Om_\rH(\bCPE)=\oplus_{q\in\bbZ}\Om^q_\rH(\bCPE)$ of {\it horizontal differential forms} 
and the {\it horizontal differential} $d_\rH\in\End_\bbR(\Om_\rH(\bCPE))$ on the space $\bCPE$ 
are defined as above (see Subsection \ref{dfs}). Namely, here: 
\begin{itemize} 
	\item 
		$\fD_\cJ(\bB)=\{\ze\in\fD(\bB)\mid \ze|_\cJ : \cJ\to\cJ\}$; 
	\item 
		$\fD(\bB;\cJ)=\{\ze\in\fD(\bB)\mid \ze : \cA(\bB)\to\cJ\}$; 
	\item 
		$\fD(\bCPE)=\fD_\cJ(\bB)\big/\fD(\bB;\cJ)$ is the Lie factor-algebra 
		of {\it derivations of the factor-algebra} $\cA(\bCPE)$; 
	\item 
		$\fD(\bCPE)=\fD_\rV(\bCPE)\oplus_{\cA(\bCPE)}\fD_\rH(\bCPE)$; 
	\item 
		$\fD_\rV(\bCPE)=\{\bar\ze\in\fD(\bCPE)\mid \ze\in\fD_\rV(\bB)\cap\fD_\cJ(\bB)\}$; 
	\item 
		$\fD_\rH(\bCPE)=\{\bar\ze\in\fD(\bCPE)\mid \ze\in\fD_\rH(\bB)\subset\fD_\cJ(\bB)\}$; 
	\item 
		$\Sym(\cA(\bCPE),\fD_\rH(\bCPE))\!
		\!=\!\{\bar\ze\in\fD_\rV(\bCPE)\mid[\bar D_\mu,\bar\ze]\!=\!\bar0\!=\!\cJ, \ \mu\in\rM\}$. 
\end{itemize} 
To write the $\bbZ$-graded $\cA(\bCPE)$-module $\Om_\rH(\bCPE)=\oplus_{q\in\bbZ}\Om^q_\rH(\bCPE)$ 
and its cohomology spaces it is sufficient to replace $\bCE$ with $\bCPE$ 
in all relevant formulas. In particular, 
\begin{equation*} 
	\Om^q_\rH(\bCPE)=\begin{cases} 
	0,                                                & q<0,q>m; \\
	\cA(\bCPE),                                       & q=0; \\
	\Hom_{\cA(\bCPE)}(\w^q\fD_\rH(\bCPE);\cA(\bCPE)), & 1\le q\le m.  
	             \end{cases}
\end{equation*}
The {\it horizontal differential} $d_\rH\in\End_\bbR(\Om_\rH(\bCPE))$, $d_\rH\com d_\rH=0$, 
is defined by the rule, 
\begin{align*} 
	d^q_\rH=d_\rH\big|_{\Om^q_\rH(\bCPE)} : \Om^q_\rH(\bCPE)&\to\Om^{q+1}_\rH(\bCPE), \\ 
	      \bar\om_{\mu_1\dots\mu_q}\cdot d\bar x^{\mu_1}\w\dots\w d\bar x^{\mu_q}
	       &\mapsto 
	       \bar D_{[\mu_0}\bar\om_{\mu_1\dots\mu_q]}\cdot 
	       d\bar x^{\mu_0}\w\dots\w d\bar x^{\mu_q},  
\end{align*}
where $d^{q+1}_\rH\com d^q_\rH=0$, $q\in\bbZ$. 

Now, let us define the global coordinates on the subspace $\bCPE\subset\bB$. 
Namely, we split $\bbI=\bbI_1\cup\bbI'_1$ (see Subsection \ref{NT}) and get   
\begin{itemize}
	\item 
		$\bbR_\bbI=\bbR_{\bbI_1}\times\bbR_{\bbI'_1}$; 
	\item 
		$\bB=\rX\times\bbR^1_{\bbI_0}\times\bbR^1_{\bbI'_0}\times\bbR^{\rN}_\bbI\times
		\bbR_{\bbI_1}\times\bbR_{\bbI'_1}$;
	\item
		$\bCPE=\bCE\cap\bPE=\rX\times\bbR^1_{\bbI_0}\times\bbR^{\rN}_\bbI\times
		\bbR_{\bbI_1}$, i.e., the subspace $\bCPE$ has the global coordinates 
		$
		(x,\bu,\bp)=\{x^\mu,u^1_{\ri_0},u^\al_\ri,p_{\ri_1}\},
		$ 
		where the indices 
		$\mu\in\rM, \ \al\in\rN, \ \ri_0\in\bbI_0, \ \ri_1\in\bbI_1, \ \ri\in\bbI$;  
	\item 
		$\CE=u^\mu_{(\mu)}=0$, \quad 
		$\PE=\De p+u^\la_{(\mu)}u^\mu_{(\la)}\big|_{\bCE}=p_{2(1)}+\Phi(\bu,\bp)=0$;  
	\item
		$\Phi(\bu,\bp)=\De'p+u^\la_{(\mu)}u^\mu_{(\la)}\big|_{\bCE}$, \quad 
		$\De'=\de^{\al\be}D_\al\com D_\be$; 
	\item 
		$u^1_{\ri'_0}=-u^\al_{\ri'_0-(1)+(\al)}$, \ $\ri'_0\in\bbI'_0$, \quad 
		$p_{\ri'_1}=-D_{\ri'_1-2(1)}\Phi(\bu,\bp)$, \ $\ri'_1\in\bbI'_1$;
	\item 
		$D_\ri(u^\la_{(\mu)}u^\mu_{(\la)})
		=\sum_{\rk+\rl=\ri}\binom{\ri}{\rk}u^\la_{\rk+(\mu)}
		u^\mu_{\rl+(\la)}$, \ $\ri=\ri'_1-2(1)\in\bbI$.    
\end{itemize}
In these coordinates, we have: 
\begin{itemize} 
	\item 
		$\cA(\bCPE)=\cC^\infty_{\fin}(\bCPE) \\ 
		\phantom{12345678}
		=\{f\in\cA(\bB)\mid \p_{u^1_{\ri'_0}}f=0, \p_{p_{\ri'_1}}f=0, 
		\ri'_0\in\bbI'_0, \ri'_1\in\bbI'_1\}\subset\cA(\bB)$; 
	\item 
		$\fD_\rV(\bCPE)=\big\{\ze=\ze^1_{\ri_0}\p_{u^1_{\ri_0}}+\ze^\al_\ri\p_{u^\al_\ri}
		+\ze_{\ri_1}\p_{p_{\ri_1}}\big\}$, 
		\quad where  \\ 
		$\ze^1_{\ri_0},\ze^\al_\ri,\ze_{\ri_1}\in\cA(\bCPE)$, 
		$\ri_0\in\bbI_0$, $\ri_1\in\bbI_1$, $\ri\in\bbI$, $\al\in\rN$; 
	\item 
		$D_\mu=\p_{x^\mu}+u^1_{\ri_0+(\mu)}\p_{u^1_{\ri_0}}\!
		+u^\al_{\ri+(\mu)}\p_{u^{\al}_\ri}
		+p_{\ri_1+(\mu)}\p_{p_{\ri_1}}$, \quad  $\mu\in\rM$, \quad where \\ 
		$u^1_{\ri_0+(1)}=-u^\al_{\ri_0+(\al)}$, \ 
		$p_{\ri_0+2(1)}=-D_{\ri_0}\Phi(\bu,\bp)$, \ 
		 \ $\ri_0\in\bbI_0$, \  $\ri_1=\ri_0,\ri_0+(1)$; 
	\item 
		$\Sym(\cA(\bCPE),\fD_\rH(\bCPE)) \\
		\phantom{1234567890}
		=\{\ze=\ev_\rf\in\fD_\rV(\bCPE)\mid [D_\mu,\ev_\rf]=0, 
		\ \mu\in\rM\}$, where  
		\begin{itemize} 
		\item 
		$\ev_\rf=D_{\ri_0}f^1\cdot\p_{u^1_{\ri_0}}+D_\ri f^\al\cdot\p_{u^\al_\ri}
		+D_{\ri_1}f\cdot\p_{p_{\ri_1}}$,
		\item 
		$\rf=(f^\mu,f)\in\cA(\bCPE)^\rM\times\cA(\bCPE)$, 
		\item 
		$D_\mu f^\mu=0, \quad 
			\De f+\ev_\rf\big(u^\la_{(\mu)}u^\mu_{(\la)}\big)=0$, 
		\item 
		$\De=\de^{\la\mu}D_{(\la)+(\mu)}, \quad 
		\ev_\rf\big(u^\la_{(\mu)}u^\mu_{(\la)})
		=2u^\la_{(\mu)}D_\la f^\mu\big|_{u^1_{(1)}=-u^\al_{(\al)}}$;
		\end{itemize} 
	\item 
		$\Om^q_\rH(\bCPE)=\{\om=\om_{\mu_1\dots\mu_q}\cdot 
			dx^{\mu_1}\w\dots\w dx^{\mu_q}\mid 
			\om_{\mu_1\dots\mu_q}\in\cA(\bCPE), \text{+ s-s}\}$, $1\le q\le m$; 
	\item 
		$d^q_\rH=d_\rH\big|_{\Om^q_\rH(\bCPE)} : \Om^q_\rH(\bCPE)\to\Om^{q+1}_\rH(\bCPE)$, 
		where \\ 
		$\om=\om_{\mu_1\dots\mu_q}\cdot dx^{\mu_1}\w\dots\w dx^{\mu_q}\mapsto 
		D_{[\mu_0}\om_{\mu_1\dots\mu_q]}\cdot dx^{\mu_0}\w\dots\w dx^{\mu_q}$. 
\end{itemize}  
Like it was done in the Subsection \ref{dfs}, to calculate the cohomology spaces 
$H^q_\rH(\bCPE)$ we introduce the auxiliary complex $\{\Te^q,d^q_\Te \mid q\in\bbZ\}$ 
(see page \pageref{AC1}). The only changes are \label{AC2}
\begin{itemize} 
	\item 
		the set $\bCE$ is replaced with the set $\bCPE$;   
	\item 
		the basic derivations $D_1,D_\al$ are as in the present section; 
	\item 
		the operator $\tilde D_1=D_1+\rf^*$, 
	\item 
		$D_1=\p_{x^1}+\ev'_\rf$,\quad where \quad $\rf=(f^1_0,f^\al_{i^1},f_0,f_1)$, \\ 
		$\ri_0=(0,\rj)$, \ $\ri_1=(0,\rj),(1,\rj)$, \ $\ri=(i^1,\rj)$, \ $i^1\in\bbZ_+$,  
		\ $\rj\in\bbJ$; 
	\item 
		$\ev'_\rf=D_\rj f^1_0\cdot\p_{u^1_{0,\rj}}+D_\rj f^\al_{i^1}\cdot\p_{u^\al_{i^1,\rj}}
		+D_\rj f_0\cdot\p_{p_{0,\rj}}+D_\rj f_1\cdot\p_{p_{1,\rj}}$; 
	\item 
		$f^1_0=-u^\al_{(\al)}$, \ $f^\al_{i^1}=u^\al_{i^1+1,0}$, \ 
		$f_0=p_{1,0}$, \ $f_1=-\Phi(\bu,\bp)$; 
	\item 
		$\De'=\de^{\al\be}D_{(\al)+(\be)}$, \quad 
		$u^\la_{(\mu)}u^\mu_{(\la)}=u^\al_{(\al)}u^\be_{(\be)}+2u^1_{(\al)}u^\al_{(1)}
		+u^\al_{(\be)}u^{\be}_{(\al)}$; 
	\item 
		$\chi\mapsto\rf^*\chi$, \quad 
		$\chi=(\chi^0_1,\chi^{i^1}_\al,\chi^0,\chi^1)$, 
		where the set $\chi$ is {\it finite}, i.e., only 
		finitely many of its elements are nonzero,
	\item 
		$(\rf^*\chi)=((\rf^*\chi)^0_1,(\rf^*\chi)^{i^1}_\al,
		(\rf^*\chi)^0,(\rf^*\chi)^1)$; 
	\item 
		$(\rf^*\chi)^0_1=2D_\al(u^\al_{(1)}\chi^1)$, \\ 
		$(\rf^*\chi)^{i^1}_\al=\de^{i^1}_0D_\al\chi^0_1+\chi^{i^1-1}_\al \\
		\phantom{123456}+2\big(\de^{i^1}_0D_\al(u^\be_{(\be)}\chi^1)
		-\de^{i^1}_1u^1_{(\al)}\chi^1+\de^{i^1}_0D_\be(u^{\be}_{(\al)}\chi^1)\big)$, \\ 
		$(\rf^*\chi)^0=-\De'\chi^1$, \quad $(\rf^*\chi)^1=\chi^0$, 
\end{itemize}
where $\al,\be\in\rN$, \ $\la,\mu\in\rM$, \ $i^1\in\bbZ_+$, \ $\rj\in\bbJ$. 

The resulting system $(D_1+\rf^*)\chi=0$ for unknown function $\chi$ with a finite number 
of nonzero components reduces to 
\begin{align*}  
	&D_1\chi^0_1+(\rf^*\chi)^0_1=0, \quad 
		D_1\chi^{i^1}_\al+(\rf^*\chi)^{i^1}_\al=0, \\
	&D_1\chi^0+(\rf^*\chi)^0=0, \quad D_1\chi^1+(\rf^*\chi)^1=0.  
\end{align*}
The equation 
\begin{align*}
	D_1\chi^{i^1}_\al+(\rf^*\chi)^{i^1}_\al
		&=D_1\chi^{i^1}_\al+\de^{i^1}_0D_\al\chi^0_1+\chi^{i^1-1}_\al \\
		&+2\big(\de^{i^1}_0D_\al(u^\be_{(\be)}\chi^1)
			-\de^{i^1}_1u^1_{(\al)}\chi^1+\de^{i^1}_0D_\be(u^{\be}_{(\al)}\chi^1)\big)=0  
\end{align*}
for $i^1\ge2$, gives $\chi^{i^1}_\al=0$ for $i^1\ge1$,   
for $i^1=1$ it gives $\chi^0_\al=2u^1_{(\al)}\chi^1$, 
and for $i^1=0$ it takes the form  
$D_1\chi^0_\al+D_\al\chi^0_1+2\big(D_\al(u^\be_{(\be)}\chi^1)
+D_\be(u^\be_{(\al)}\chi^1)\big)=0$. 
Further, the equation $D_1\chi^1+(\rf^*\chi)^1=D_1\chi^1+\chi^0=0$ 
gives $\chi^0=-D_1\chi^1$, 
the equation $D_1\chi^0+(\rf^*\chi)^0=-D_{2(1)}\chi^1-\De'\chi^1=-\De\chi^1=0$ 
gives $\De\chi^1=0$, 
and the last one equation $D_1\chi^0_1+(\rf^*\chi)^0_1=D_1\chi^0_1
+2D_\al(u^\al_{(1)}\chi^1)=0$ gives $D_1\chi^0_1+2D_\al(u^\al_{(1)}\chi^1)=0$. 

After some algebra we get 
\begin{lemma}\label{L2} 
The system $(D_1+\rf^*)\chi=0$, $\chi=(\chi^0_1,\chi^{i^1}_\al,\chi^0,\chi^1)$, 
reduces to 
\begin{equation} 
	\chi^0_\al=2u^1_{(\al)}\chi^1, \quad \chi^{i^1}_\al=0, \quad 
	\chi^0=-D_1\chi^1, \quad  i^1\ge1, \ \al\in\rN, 
\end{equation} 
where $\chi^1,\chi^0_1$ are solutions of the system 
\begin{align} 
	&\De\chi^1=0,  \quad 
	\big(u^\mu_{(1)}D_\mu D_\al-u^\mu_{(\al)}D_\mu D_1\big)\chi^1=0, 
	\quad \al\in\rN,\label{00} \\
	&D_1\chi^0_1=-2D_\al(u^\al_{(1)}\chi^1), \   
	 D_\al\chi^0_1=-2\big(u^\mu_{(\al)}D_\mu\chi^1+D_\al(u^\be_{(\be)}\chi^1)\big),
	 	\ \al\in\rN \label{01}. 
\end{align} 
\end{lemma}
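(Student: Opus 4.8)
The plan is to run the elimination exactly as it was begun in the discussion preceding the statement, peeling off the auxiliary components $\chi^{i^1}_\al$ $(i^1\ge1)$, then $\chi^0_\al$, then $\chi^0$, and finally isolating a closed system for the two survivors $\chi^1,\chi^0_1$. No new idea is needed beyond careful bookkeeping with the commuting operators $D_\mu$ and with the relations $u^\mu_{(\mu)}=0$, $u^1_{(1)}=-u^\be_{(\be)}$, $p_{2(1)}=-\Phi$ that hold identically in the chosen global coordinates on $\bCPE$.

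First I would dispose of the $\chi^{i^1}_\al$-equations. For $i^1\ge2$ they read $D_1\chi^{i^1}_\al+\chi^{i^1-1}_\al=0$, since all the Kronecker symbols vanish; as $\chi$ is finite, a descent from the highest nonzero level forces $\chi^{i^1}_\al=0$ for all $i^1\ge1$. The $i^1=1$ equation $D_1\chi^1_\al+\chi^0_\al-2u^1_{(\al)}\chi^1=0$ then gives $\chi^0_\al=2u^1_{(\al)}\chi^1$. Next, $D_1\chi^1+\chi^0=0$ gives $\chi^0=-D_1\chi^1$; substituting this into $D_1\chi^0-\De'\chi^1=0$ turns it into $-D_{2(1)}\chi^1-\De'\chi^1=-\De\chi^1=0$, i.e. $\De\chi^1=0$; and the remaining scalar equation is already in the form $D_1\chi^0_1+2D_\al(u^\al_{(1)}\chi^1)=0$, the first equation of (01). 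This part is essentially transcribed from the text.

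The one substantive elimination is the $i^1=0$ equation $D_1\chi^0_\al+D_\al\chi^0_1+2\bigl(D_\al(u^\be_{(\be)}\chi^1)+D_\be(u^\be_{(\al)}\chi^1)\bigr)=0$. Putting in $\chi^0_\al=2u^1_{(\al)}\chi^1$ and solving for $D_\al\chi^0_1$, one meets the combination $D_1(u^1_{(\al)}\chi^1)+D_\be(u^\be_{(\al)}\chi^1)=D_\mu(u^\mu_{(\al)}\chi^1)$ with $\mu$ summed over $\rM$; since $D_\mu(u^\mu_{(\al)}\chi^1)=u^\mu_{(\al)+(\mu)}\chi^1+u^\mu_{(\al)}D_\mu\chi^1$ and $u^\mu_{(\al)+(\mu)}=D_\al(u^\mu_{(\mu)})=0$ on $\bCPE$, this collapses to $u^\mu_{(\al)}D_\mu\chi^1$ and yields the second equation of (01). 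The same continuity identity rewrites $D_\al(u^\al_{(1)}\chi^1)$ in the first equation of (01), so that both equations of (01) acquire the uniform shape $D_\nu\chi^0_1=-2\bigl(u^\mu_{(\nu)}D_\mu\chi^1+D_\nu(u^\be_{(\be)}\chi^1)\bigr)$, $\nu\in\rM$.

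Finally, the remaining relation $\bigl(u^\mu_{(1)}D_\mu D_\al-u^\mu_{(\al)}D_\mu D_1\bigr)\chi^1=0$ of (00) is the integrability condition of this overdetermined first-order system for $\chi^0_1$: imposing $D_1 D_\al\chi^0_1=D_\al D_1\chi^0_1$ and expanding by Leibniz, the genuinely second-order terms in $u^\be_{(\be)}\chi^1$ cancel by $[D_1,D_\al]=0$, the first-order terms cancel by $u^\mu_{(1)+(\al)}=u^\mu_{(\al)+(1)}$, and exactly $u^\mu_{(\al)}D_\mu D_1\chi^1-u^\mu_{(1)}D_\mu D_\al\chi^1$ is left over. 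I expect the only real friction to be here — organizing the Leibniz expansion and recognizing which of the many terms are killed by commutativity versus by the continuity relations — together with the routine but tedious check that the analogous $\{\al,\be\}$-compatibility conditions for $\chi^0_1$ are consequences of $\De\chi^1=0$ and the relations already obtained, so that nothing beyond (00)--(01) survives.
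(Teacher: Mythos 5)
Your proposal is correct and follows essentially the same route as the paper: eliminate $\chi^{i^1}_\al$ by descent, read off $\chi^0_\al=2u^1_{(\al)}\chi^1$, $\chi^0=-D_1\chi^1$, $\De\chi^1=0$, then substitute into the $i^1=0$ equation using the differentiated continuity relation $u^\mu_{(\al)+(\mu)}=0$ to get (\ref{01}), with the right part of (\ref{00}) arising as the cross-derivative compatibility condition $D_1D_\al\chi^0_1=D_\al D_1\chi^0_1$ — exactly the "some algebra" the paper leaves implicit (cf. its remark that (\ref{00}) is the compatibility condition of (\ref{01})). Your explicit verification of the uniform form $D_\nu\chi^0_1=-2\big(u^\mu_{(\nu)}D_\mu\chi^1+D_\nu(u^\be_{(\be)}\chi^1)\big)$ is a correct and useful spelling-out of that step.
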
 

\begin{rem} 
The right subsystem of the system (\ref{00}) is the 
{\it non-trivial compatibility condition} of the system (\ref{01}).
\end{rem}

\begin{rem}\label{R3} 
The system (\ref{00}) has the trivial solution $\chi^1=0$, 
then the system (\ref{01}) gives $\chi^0_1=\const$, and we get the solution 
$\chi=(1,0,0,0)$ of the full system $(D_1+\rf^*)\chi=0$ 
(see the similar result in Subsection \ref{dfs}).
\end{rem} 

\begin{theorem} 
The linear spaces of the cohomologies of the differential algebra 
$(\cA(\bCPE),\fD_\rH(\bCPE))$ are 
\text{\rm(cf., Theorem \ref{T1} and Theorem \ref{TCE})}  
\begin{equation*} 
	H^q_\rH(\bCPE)=\begin{cases} 
		0,                              & q<0,1\le q\le m-2, q>m; \\
		\bbR,                           & q=0; \\ 
		\ke D^{m-1}_1\simeq\cS\cap\cH,  & q=m-1; \\
		H^{m-1}(\Te)\big/\im D^{m-1}_1, & q=m; 
		   	   \end{cases}
\end{equation*} 
where 
\begin{itemize}
	\item 
		$\cS=\Sol(D_1+\rf^*)$ is the linear space of solutions 
		$\chi=(\chi^0_1,\chi^{i^1}_\al,\chi^0,\chi^1)$ of the linear system 
		$D_1\chi+\rf^*\chi=0$ \text{\rm (the mapping $\rf^*$ is defined above)}; 
	\item 
		$\cH=\{\chi=(\chi^0_1,\chi^{i^1}_\al,\chi^0,\chi^1)\mid \chi_*=\chi^*\}$ is 
		the Helmholtz space of the differential algebra $(\cA(\bCPE),\fD_\rH(\bCPE))$; 
	\item 
		$H^{m-1}_\rH(\bCPE)=\cS\cap\cH$, \quad 
		$[J^\mu\cdot d_\mu x]\mapsto
		(\de_{u^1_0}J^1,\de_{u^\al_{i^1}}J^1,\de_{p_0}J^1,\de_{p_1}J^1)$ 
		\text{\rm (remind, cohomologies are defined up to isomorphisms)}. 
\end{itemize}	
\end{theorem}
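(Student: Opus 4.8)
The plan is to repeat, \emph{mutatis mutandis}, the computation that produced Theorem~\ref{TCE}, now with $\bCE$ replaced by $\bCPE$ and $\cI$ by $\cJ$, and then to insert the explicit description of the relevant kernel furnished by Lemmas~\ref{L1}--\ref{L2}. First I would invoke the auxiliary complex $\{\Te^q,d^q_\Te\mid q\in\bbZ\}$ introduced on page~\pageref{AC2}, together with the splitting $\Om^q_\rH(\bCPE)=dx^1\w\Te^{q-1}\oplus_{\cA(\bCPE)}\Te^q$ and the short exact sequences of complexes $0\to\Te^{q-1}\xrightarrow{\io^{q-1}}\Om^q_\rH(\bCPE)\xrightarrow{\pi^q}\Te^q\to0$, exactly as on page~\pageref{AC1}. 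These produce the long exact cohomology sequence
\begin{align*}
	&0\to H^0_\rH(\bCPE)\to H^0(\Te)\xrightarrow{D^0_1}H^0(\Te)\to H^1_\rH(\bCPE)\to H^1(\Te)\to\dots \\
	&\dots\to H^{m-1}(\Te)\xrightarrow{D^{m-1}_1}H^{m-1}(\Te)\to H^m_\rH(\bCPE)\to0,
\end{align*}
whose connecting maps are, up to sign, the operators $D^q_1$ induced by $D_1=\p_{x^1}+\ev'_\rf$. Since $\Te^\bullet$ is the horizontal de Rham complex of the differential algebra $(\cA(\bCPE),\{D_\al\mid\al\in\rN\})$, which in the global coordinates $(x^\mu,u^1_{0,\rj},u^\al_{i^1,\rj},p_{0,\rj},p_{1,\rj})$ is once more an algebra of smooth functions of finite order over the $m-1$ variables $x^2,\dots,x^m$, Theorem~\ref{T1} applies to it and gives $H^q(\Te)=0$ for $q\ne0,m-1$, $H^0(\Te)=\{\phi(x^1)\}$, and $H^{m-1}(\Te)=\cH(\Te)$ via the variational derivative $\de$.

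Feeding these vanishings into the long exact sequence splits it into the same pieces as in Subsection~\ref{dfs}: $H^0_\rH(\bCPE)=\ke D^0_1=\bbR$ (cf.\ Remark~\ref{R3}); $H^1_\rH(\bCPE)=H^0(\Te)/\im D^0_1=0$; $H^q_\rH(\bCPE)=0$ for $2\le q\le m-2$, being squeezed between two zero spaces; $H^{m-1}_\rH(\bCPE)=\ke D^{m-1}_1$, from the exact sequence $0\to H^{m-1}_\rH(\bCPE)\to\ke D^{m-1}_1\to0$; and $H^m_\rH(\bCPE)=H^{m-1}(\Te)/\im D^{m-1}_1=\cH(\Te)/\im\tilde D_1$, where $\tilde D_1=D_1+\rf^*$ is the operator conjugate to $D^{m-1}_1$ under $\de$, exactly as in Lemma~\ref{L1}. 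It then remains to identify $\ke D^{m-1}_1$: transporting along $\de$ and using the analogue of Lemma~\ref{L1} for $\bCPE$, it is the kernel of $\tilde D_1=D_1+\rf^*$ restricted to the Helmholtz space $\cH$ of quadruples $\chi=(\chi^0_1,\chi^{i^1}_\al,\chi^0,\chi^1)$ satisfying $\chi_*=\chi^*$, i.e.\ the intersection $\cS\cap\cH$ with $\cS=\Sol(D_1+\rf^*)$. By Lemma~\ref{L2} the system $D_1\chi+\rf^*\chi=0$ reduces to $\chi^{i^1}_\al=0$ for $i^1\ge1$, $\chi^0_\al=2u^1_{(\al)}\chi^1$, $\chi^0=-D_1\chi^1$, with $\chi^1,\chi^0_1$ subject to (\ref{00})--(\ref{01}), so that $\cS$ is effectively parametrized by the pair $(\chi^1,\chi^0_1)$; intersecting with the condition $\chi_*=\chi^*$ yields $\cS\cap\cH$, and restricting the isomorphism $\de\colon H^{m-1}(\Te)\simeq\cH(\Te)$ to $\ke D^{m-1}_1$ produces the displayed map $[J^\mu\cdot d_\mu x]\mapsto(\de_{u^1_0}J^1,\de_{u^\al_{i^1}}J^1,\de_{p_0}J^1,\de_{p_1}J^1)$.

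The one point that needs genuine care — the main obstacle — is the verification that Theorem~\ref{T1} really does apply to the constrained algebra $(\cA(\bCPE),\{D_\al\mid\al\in\rN\})$. Concretely, one must check that in the chosen global coordinates the complex $\Te^\bullet$ is indeed the horizontal de Rham complex of a finite-order differential algebra, i.e.\ that the relations $u^1_{\ri'_0}=-u^\al_{\ri'_0-(1)+(\al)}$ and $p_{\ri'_1}=-D_{\ri'_1-2(1)}\Phi(\bu,\bp)$ used to eliminate the superfluous coordinates are compatible with the action of $D_2,\dots,D_m$ and preserve finiteness of order. Once this is secured, the homological bookkeeping above together with Lemmas~\ref{L1}--\ref{L2} finishes the proof.
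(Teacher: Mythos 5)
Your proposal is correct and follows essentially the same route as the paper: the auxiliary complex $\{\Te^q,d^q_\Te\}$ with the modifications listed for $\bCPE$, the long exact sequence coming from $0\to\Te^{q-1}\to\Om^q_\rH(\bCPE)\to\Te^q\to0$, Theorem~\ref{T1} applied to the $\Te$-complex, and the identification of $\ke D^{m-1}_1$ with $\cS\cap\cH$ via the analogue of Lemma~\ref{L1} and the reduction in Lemma~\ref{L2}. The compatibility check you flag at the end is likewise taken for granted (not carried out) in the paper, so your argument matches its level of rigor.
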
 

\begin{rem}\label{R5} 
In particular, $H^{m-1}_\rH(\bCPE)\ni[u^\mu d_\mu x]$, see Remark \ref{R3}. 
Moreover, the additional constraint $\PE=\De p+u^\la_{(\mu)}u^\mu_{(\la)}=0$ 
generates the additional cohomology $[F^\mu d_\mu x]\in H^{m-1}_\rH(\bCPE)$, 
where $F^\mu=-u^\la u^\mu_{(\la)}+\nu\De u^\mu+p^{(\mu)}$, since here 
\begin{align*} 
	D_\mu F^\mu
		&=-\big(\De p+u^\la_{(\mu)}u^\mu_{(\la)}\big)
			+\big(-u^\la D_\la u^\mu_{(\mu)}+\nu\De u^\mu_{(\mu)}\big) \\
		&=-\PE+\big(\nu\De-u^\la D_\la\big)\CE=0
\end{align*}
due to the constraints $\PE=0$ and $\CE=0$. 
Here, $p^{(\mu)}=D^\mu p=\de^{\mu\la}D_\la p=D_\mu p=p_{(\mu)}$ 
(the euclidean metrics). 
\end{rem} 

\begin{rem} 
Every symmetry $\ev_\rf\in\Sym(\cA(\bCPE),\fD_\rH(\bCPE))$, $\rf=(f^\mu,f)$, 
generates (possibly trivial) cohomology $[f^\mu d_\mu x]\in H^{m-1}_\rH(\bCPE)$, 
because in this case $D_\mu f^\mu=0$ by definition. 
\end{rem}

\section{Navier-Stokes equations as evolution in the space $\bCPE$.} 

\subsection{Evolution in the space $\bCPE$.} 

The evolution in the space 
$$ 
	\bCPE=
	\rT\times\rX\times\big(\bbR^1_{\bbI_0}\times\bbR^{\rN}_\bbI\big)\times\bbR_{\bbI_1}
	=\big\{t,\rx=(x^\mu),\bu=(u^1_{\ri_0},u^\al_\ri),\bp=(p_{\ri_1})\big\}
$$ 
($\rM=2,3$) is governed by an evolution derivation 
$
	D_t=\p_t+\ev_\rE, 
$
where 
\begin{itemize} 
	\item 
		we added the time variable $t\in\rT=\bbR$, one may assume that it was present 
		from the start as a parameter; 
	\item 
		$\rE=(E^\mu,E)\in\cA^\rM(\bCPE)\times\cA(\bCPE)$; 
	\item 
		$D_\mu E^\mu=0, \quad  
		\De E+2\big(u^\la_{(\mu)}D_\la E^\mu\big)\big|_{u^1_{(1)}
		=-u^\al_{(\al)}}=0, \quad \De=\de^{\la\mu}D_{(\la)+(\mu)}$; 
	\item 
		$\ev_\rE
		=D_{\ri_0}E^1\cdot\p_{u^1_{\ri_0}}+D_\ri E^\al\cdot\p_{u^\al_\ri}
		+D_{\ri_1}E\cdot\p_{p_{\ri_1}}\in\Sym(\cA(\bCPE),\fD_\rH(\bCPE))$
\end{itemize}
(see Subsection \ref{ADC}, remind also Proposition \ref{sce} and Remark \ref{rce}). 

\begin{rem} 
In particular, $D_t u^\mu=F^\mu$, $\mu\in\rM$, and $D_t p=F$, while 
$$ 
	D_t\CE\!=\!D_t\big(D_\mu u^\mu\big)\!=\!D_\mu E^\mu, \quad 
	D_t\PE\!=\!D_t\big(\De p+u^\la_{(\mu)}u^\mu_{(\la)}\big)
	\!=\!\De E+2u^\la_{(\mu)}D_\la E^\mu.
$$
\end{rem} 

There is defined the differential algebra $(\cA(\bCPE),\fD_\rE(\bCPE))$, 
where 
\begin{itemize} 
	\item 
		$\fD(\bCPE)=\fD_\rV(\bCPE)\oplus_{\cA(\bCPE)}\fD_\rE(\bCPE)$; 
	\item 
		$\fD_\rV(\bCPE)=\{\ze=\ze^1_{\ri_0}\p_{u^1_{\ri_0}}+\ze^\al_\ri\p_{u^\al_\ri}
		+\ze_{\ri_1}\p_{p_{\ri_1}}\mid
		\ze^1_{\ri_0},\ze^\al_\ri,\ze_{\ri_1}\in\cA(\bCPE)\}$; 
	\item 
		$\fD_\rE(\bCPE)$ has the $\cA(\bCPE)$-basis 
		$\{D_t,D_\mu\mid \mu\in\rM\}$, 
		the basic time derivation  $D_t=\p_t+\ev_\rE$, 
		$[D_t,D_\mu]=0$, $\mu\in\rM$, so  
		$$ 
		\fD_\rE(\bCPE)=\big\{\ze=\ze^t D_t+\ze^\mu D_\mu \ \big| 
		\ \ze^t,\ze^\mu\in\cA(\bCPE)\big\}.
		$$ 
\end{itemize}  
The Lie algebra of symmetries here is 
\begin{align*} 
	\Sym(\cA(\bCPE),&\fD_\rE(\bCPE)) \\ 
    &=\big\{\ze=\ev_\rf\in\Sym(\cA(\bCPE,\fD_\rH(\bCPE))
    \mid [D_t,\ev_\rf]=0\big\},	
\end{align*} 
where the condition $[D_t,\ev_\rf]=0$ reduces to the equation 
$(D_t-\rE_*)\rf=0$. 

In more detail, here 
\begin{itemize} 
	\item
		$\rE_* : \cA(\bCPE)^\rM\times\cA(\bCPE)\to\cA(\bCPE)^\rM\times\cA(\bCPE)$, 
	\item
		$\rf=(f^\mu,f)\mapsto\rF_*\rf=((\rF_*\rf)^\mu,(\rF_*\rf))$,  
	\item 
		$(\rE_*\rf)^\mu
			=\p_{u^1_{\ri_0}}E^\mu\cdot D_{\ri_0}f^1
			+\p_{u^\al_\ri}E^\mu\cdot D_\ri f^\al
			+\p_{p_{\ri_1}}E^\mu\cdot D_{\ri_1}f$, 
	\item  
		$(\rE_*\rf)
			=\p_{u^1_{\ri_0}}E\cdot D_{\ri_0}f^1
			+\p_{u^\al_\ri}E\cdot D_\ri f^\al 
			+\p_{p_{\ri_1}}E\cdot D_{\ri_1}f$.
\end{itemize}

As above, to study the cohomology spaces 
$H^q_\rE(\bCPE)=\ke d^q_\rE\big/\im d^{q-1}_\rE$, $q\in\bbZ$, 
of the differential algebra $(\cA(\bCPE),\fD_\rE(\bCPE))$ 
we shall use the decomposition (see pages \pageref{AC1} and \pageref{AC2}):   
\begin{itemize} 
	\item 
		$\Om^q_\rH=\Om^q_\rH(\bCPE) \\ 
		\phantom{123}
		=\big\{\om^q_\rH=\om_{\mu_1\dots\mu_q}\cdot dx^{\mu_1}\w\dots
		\w dx^{\mu_q} \ \big| \ \om_{\mu_1\dots\mu_q}
		\in\cA(\bCPE), \text{+ s-s}\big\}$;
	\item 
		$\Om^q_\rE=\Om^q_\rE(\bCPE)=dt\w\Om^{q-1}_\rH
		\oplus_{\cA(\bCPE)}\Om^q_\rH$, \quad $q\in\bbZ$; 
	\item 
		$0\to\Om^{q-1}_\rH\xrightarrow{\io^{q-1}}\Om^q_\rE\xrightarrow{\pi^q}\Om^q_\rH\to0$, \\ 
		\phantom{$0\to$}
		$\om^{q-1}_\rH\mapsto\om^{q}_\rE=\io^{q-1}\om^{q-1}_\rH=(-1)^{q-1}dt\w\om^{q-1}_\rH$, \\ 
		\phantom{$0\to\Om^{q-1}\xrightarrow{\io^{q-1}}$}
		$\om^q_\rE=dt\w\om^{q-1}_\rH+\om^q_\rH\mapsto\pi^q\om^q_\rE=\om^q_\rH$; 
	\item 
		$d^q_\rE=d^q_t+d^q_\rH : \Om^q_\rE\to\Om^{q+1}_\rE$, 
		\quad $d_t=dt\w D_t$, \quad $d_\rH=dx^\mu\w D_\mu$, \\  
		$\om^q_\rE=dt\w\om^{q-1}_\rH+\om^q_\rH\mapsto d_\rE\om^q_\rE
			=dt\w(D_t\om^q_\rH-d_\rH\om^{q-1}_\rH)+d_\rH\om^q_\rH$.  
\end{itemize} 
These constructions lead to the commutative diagram with the long exact sequence of the cohomology spaces 
\begin{align*} 
	                                         0\to&H^0_\rE(\bCPE)\to 
	                                  H^0_\rH(\bCPE)
	                    \xrightarrow{D^0_t} \\ 
	         \xrightarrow{D^0_t}H^0_\rH(\bCPE)\to&H^1_\rE(\bCPE)
	\to H^1_\rH(\bCPE)\xrightarrow{D^1_t}\dots \\
\dots\xrightarrow{D^{m-2}_t}H^{m-2}_\rH(\bCPE)\to&H^{m-1}_\rE(\bCPE)
 	        \to H^{m-1}_\rH(\bCPE)\xrightarrow{D^{m-1}_t} \\          
 	 \xrightarrow{D^{m-1}_t}H^{m-1}_\rH(\bCPE)\to&H^m_\rE(\bCPE)
 	 \to H^m_\rH(\bCPE)\xrightarrow{D^m_t} \\ 
	         \xrightarrow{D^m_t}H^m_\rH(\bCPE)\to&H^{m+1}_\rE(\bCPE)\to0,    
\end{align*}
where $D^q_t : H^q_\rH(\bCPE)\to H^q_\rH(\bCPE)$ 
by the component-wise rule  
$$ 
	D^q_t[\om_{\mu_1\dots\mu_q}\cdot dx^{\mu_1}\w\dots\w dx^{\mu_q}]
		=[(D_t\om_{\mu_1\dots\mu_q})\cdot dx^{\mu_1}\w\dots\w dx^{\mu_q}].
$$

\begin{theorem} 
The linear spaces of cohomologies of the differential algebra $(\cA(\bCPE);\fD_\rE(\bCPE))$ are 
\begin{equation*}
 H^q_\rE(\bCPE)=\begin{cases} 
 	0,                           & q<0,1\le q\le m-2,q> m+1; \\
 	\bbR,                        & q=0; \\
 	\ke D^{m-1}_t,               & q=m-1; \\ 
 	H^m_\rH(\bCPE)\big/\im D^m_t, & q=m+1; 
               \end{cases}
\end{equation*} 
while in the case $q=m$ one has $H^m_\rE(\bCPE)\big/\im H^{m-1}_\rH(\bCPE)=\ke D^m_t$.
\end{theorem}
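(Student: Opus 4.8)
The plan is to substitute the known horizontal cohomology of $\bCPE$ into the long exact sequence displayed above and to read off $H^q_\rE(\bCPE)$ degree by degree, exactly as was done for $\bCE$ in Subsection~\ref{dfs}. First I would record which groups occur in that sequence. By the theorem computing $H^q_\rH(\bCPE)$ — with the time variable $t$ now adjoined, which, since the $D_\mu$, $\mu\in\rM$, do not involve $\p_t$, replaces the degree-zero group $\bbR$ by $\cC^\infty(\rT)$ but changes nothing else in the pattern — one has $H^q_\rH(\bCPE)=0$ for $q<0$ and for $1\le q\le m-2$, $H^0_\rH(\bCPE)=\cC^\infty(\rT)$ (functions of $t$ alone), while $H^{m-1}_\rH(\bCPE)$ and $H^m_\rH(\bCPE)$ are the (nonzero) groups on which $D^{m-1}_t$ and $D^m_t$ act in the desired answer. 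The second ingredient is that $D^0_t$ acts on $H^0_\rH(\bCPE)=\cC^\infty(\rT)$ simply as $\p_t$, because $\ev_\rE$ is vertical and annihilates functions of $t$; hence $D^0_t$ is surjective, with $\ke D^0_t=\bbR$.

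With these inputs the long exact sequence breaks into the short exact pieces
\[
	0\to H^{q-1}_\rH(\bCPE)\big/\im D^{q-1}_t\to H^q_\rE(\bCPE)
	\to\ke\!\big(D^q_t:H^q_\rH(\bCPE)\to H^q_\rH(\bCPE)\big)\to0 ,
\]
valid for all $q\in\bbZ$. For $q<0$ and for $q>m+1$ both flanking groups vanish, so $H^q_\rE(\bCPE)=0$ (for $q>m+1$ one also has $\Om^q_\rE(\bCPE)=0$ outright). For $q=0$ the left term is $0$ and the right one is $\ke D^0_t=\bbR$, so $H^0_\rE(\bCPE)=\bbR$. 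If $m\ge3$, then for $q=1$ we have $H^1_\rH(\bCPE)=0$ and, by the surjectivity of $D^0_t$, also $H^0_\rH(\bCPE)\big/\im D^0_t=0$, whence $H^1_\rE(\bCPE)=0$; for $2\le q\le m-2$ both $H^{q-1}_\rH(\bCPE)$ and $H^q_\rH(\bCPE)$ vanish, so $H^q_\rE(\bCPE)=0$. (When $m=2$ the range $1\le q\le m-2$ is empty, and $q=1$ is handled together with $q=m-1$ below.)

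For the top degrees: since $H^{m-2}_\rH(\bCPE)=0$ — and when $m=2$ this slot is $H^0_\rH(\bCPE)$, on which $D^0_t$ is onto, so $H^{m-2}_\rH(\bCPE)\big/\im D^{m-2}_t=0$ in either case — the piece at $q=m-1$ gives $H^{m-1}_\rE(\bCPE)=\ke D^{m-1}_t$. At $q=m$ the piece reads $0\to H^{m-1}_\rH(\bCPE)\big/\im D^{m-1}_t\to H^m_\rE(\bCPE)\to\ke D^m_t\to0$, which is exactly the assertion $H^m_\rE(\bCPE)\big/\im H^{m-1}_\rH(\bCPE)=\ke D^m_t$, the image of $H^{m-1}_\rH(\bCPE)$ in $H^m_\rE(\bCPE)$ being $H^{m-1}_\rH(\bCPE)\big/\im D^{m-1}_t$ by exactness. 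Finally, at $q=m+1$ we have $H^{m+1}_\rH(\bCPE)=0$, so $H^{m+1}_\rE(\bCPE)=H^m_\rH(\bCPE)\big/\im D^m_t$.

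The homological bookkeeping is routine once the horizontal cohomology is in hand; the one point that needs care is the degree-zero term after $t$ is adjoined — it is $\cC^\infty(\rT)$, not $\bbR$, and the surjectivity of $\p_t=D^0_t$ on it is precisely what makes $H^1_\rE(\bCPE)$ vanish. I do not expect closed-form descriptions of $\ke D^{m-1}_t$, $\ke D^m_t$ or $\im D^m_t$: each would require solving the linear system $(D_t-\rE_*)\rf=0$ on cohomology representatives, which the statement deliberately leaves unresolved.
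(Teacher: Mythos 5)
Your proposal is correct and takes essentially the same route as the paper: it splits the long exact sequence coming from the decomposition $\Om^q_\rE(\bCPE)=dt\w\Om^{q-1}_\rH(\bCPE)\oplus_{\cA(\bCPE)}\Om^q_\rH(\bCPE)$ into short exact pieces, substitutes the known $H^q_\rH(\bCPE)$ (with $H^0_\rH(\bCPE)=\cC^\infty(\rT)$ after adjoining $t$), and uses the surjectivity of $D^0_t=\p_t$ to kill $H^1_\rE(\bCPE)$, exactly as in the paper's proof. Your explicit treatment of the $m=2$ case at $q=m-1$ is a minor refinement beyond what the paper writes, not a different method.
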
 
\begin{proof} 
Indeed, the exact subsequence $0\to H^0_\rE(\bCPE)\to\ke D^0_t\to 0$ gives $H^0_\rE(\bCPE)=\bbR$ 
(remind, we added the time variable $t\in\rT=\bbR$, so now  
$H^0_\rH(\bCPE)=\cT=\{\phi(t)\in\cC^\infty(\bbR)\}$). 
Further, the exact subsequence 
$$
	H^0_\rH(\bCPE)\xrightarrow{D^0_t}H^0_\rH(\bCPE)\to H^1_\rE(\bCPE)\to H^1_\rH(\bCPE)=0
$$ 
gives $H^1_\rE(\bCPE)=\cT\big/\im D^0_t=0$. 
Now, the exact subsequence 
$$
	H^{q-1}_\rH(\bCPE)\to H^q_\rE(\bCPE)\to H^q_\rH(\bCPE)
$$ 
gives $H^q_\rE(\bCPE)=0$ for $2\le q\le m-2$ because in this case $H^{q-1}_\rH(\bCPE)=H^q_\rH(\bCPE)=0$. 
Then, the exact subsequence 
$$ 
	0=H^{m-2}_\rH(\bCPE)\to H^{m-1}_\rE(\bCPE)\to H^{m-1}_\rH(\bCPE)
		\xrightarrow{D^{m-1}_t}\dots  
$$ 
gives $H^{m-1}_\rE(\bCPE)=\ke D^{m-1}_t$. 
At last, for $q=m,m+1$ the statements follow from the exact subsequence 
$H^{m-1}_\rH(\bCPE)\to H^m_\rE(\bCPE)\to\ke D^m_t\to0$ and from the exact subsequence 
$0\to\im D^m_t\to H^m_\rH(\bCPE)\to H^{m+1}_\rE(\bCPE)\to0$. 
\end{proof} 

\begin{rem} 
To calculate $\ke D^{m-1}_t$ one may use the technics from lemmas \ref{L1} and \ref{L2}.
The cases $q=m,m+1$ in this approach are uninformative. They demand a special study.  
\end{rem}

\subsection{Navier-Stokes equations as the evolution with constraints.} 

We consider the Navier-Stokes system (\ref{NS1})-(\ref{NS4}) as the evolution 
process governed by the equation (\ref{NS1}) in the divergence-free space with the inner constraint (\ref{NS4}).   

The algebraic counterpart of the equation (\ref{NS1}) is the symmetry 
$$ 
	\ev_\rE=D_{\ri_0}E^1\cdot\!\p_{u^1_{\ri_0}}+D_\ri E^{\al}\cdot\p_{u^{\al}_\ri}
	+D_{\ri_1}E\cdot\p_{p_{\ri_1}}\in\Sym(\cA(\bCPE),\fD_\rH(\bCPE)), 
$$ 
where 
\begin{itemize} 
	\item 
		$\rE=(E^\mu,E)\in\cA(\bCPE)^\rM\times\cA(\bCPE)$; 
	\item 
		$E^\mu=-u^\la\na_\la u^\mu+\nu\De u^\mu-\na^\mu p
		=-u^\la u^\mu_{(\la)}+\nu\De u^\mu-p_{(\mu)}$;  
	\item 
		$u^1_{(1)}=-u^{\al}_{(\al)}$, \quad
		$\De u^\mu=\sum_\la u^\mu_{2(\la)}$, \quad  
		$\De u^1=-u^{\al}_{(1)+(\al)}+\sum_{\al}u^1_{2(\al)}$; 
	\item 
		$\na^\mu=\de^{\mu\nu}\na_\nu=\na_\mu$ (the euclidean metrics), 
		so $p^{(\mu)}=p_{(\mu)}$;
	\item 
		$E$ to be defined from the condition $\ev_\rE\in\Sym(\cA(\bCPE),\fD_\rH(\bCPE))$.  
\end{itemize} 
One can easily check that here $D_\mu E^\mu=0$. 
On the other hand the condition 
\begin{equation}
	\De E+\ev_\rE(u^\la_{(\mu)}u^\mu_{(\la)})=\De E+2u^\la_{(\mu)}D_\la E^\mu=0
\end{equation}
is the {\it Poisson equation for the component $E$} (cf., Remark \ref{IC}).  

Thus, for the algebraic analysis of the Navier-Stokes equations, one can use the technique outlined in the previous subsection. 

\subsection{Conclusion.}  

It can be seen from the above constructions that the Navier-Stokes equations 
are subject to meaningful analysis within the framework of the algebraic approach 
to differential equations. The resulting equations for finding algebraic characteristics 
of Navier-Stokes equations, such as symmetries and cohomologies, are essentially complicated. 
One may hope to find their partial solutions at least, especially using analytical 
computational packets (Mathematica, for example).

\newpage


\begin{thebibliography}{99} 

\bibitem{Sedov}
	L.I. Sedov, 
	A course in continuum mechanics. Volume 1. Basic equations and analytical techniques. 
	Walters--Noordhoff Publishing, Groningen, 1971. 
\bibitem{LL} 
	L.D. Landau and E.M. Lifshitz, 
	Fluid mechanics. Course of theoretical physics. Volume 6. 
	Pergamon Press, Oxford, 1987. 

\bibitem{TPT} 
	Tai-Peng Tsai, 
	Lectures on Navier-Stokes equations, 
	AMS, Providence, Rhode Island, 2018. 
	
\bibitem{PGLR}
	Pierre Gilles Lemari\'e-Rieusset, 
	The Navier-Stokes Problem in the 21st Century, 
	CRC Press, Boca Raton London New York, 2016.
	
\bibitem{KPPR}
	M.V. Korobkov, K. Pileckas, V.V. Pukhnachev, and R. Russo, 
	The flux problem for the Navier–Stokes equations, 
	Russian Math. Surveys 69:6 (2014), 1065--1122. 
	
\bibitem{FRR} 
	C.L. Fefferman, J.C. Robinson, J.L. Rodrigo (eds),
	Partial differential equations in fluid mechanics, 
	London Mathematical Society Lecture Note Series: 452, 
	Cambridge University Press, 2018. 
	
\bibitem{S} 
	W.M. Seiler, 
	Involution, 
	Springer-Verlag, Berlin, Heidelberg, 2010.  
	
\bibitem{O}
	P.J. Olver, 
	Applications of Lie groups to differential equations, 
	Springer-Verlag, New York, 1986,1993. 
\bibitem{M} 
	S. Mac Lane, 
	Homology, 
	Springer-Verlag, Berlin, Heidelberg, 1963, 1994. 
\bibitem{Z1} 
	V.V. Zharinov, 
	“Conservation laws of evolution systems”, 
	Theoret. and Math. Phys., 68:2 (1986), 745--751. 
\bibitem{Z2} 
	V.V. Zharinov, 
	“Conservation laws, differential identities, and constraints 
	of partial differential equations”, 	
	Theoret. and Math. Phys., 185:2 (2015), 1557--1581. 







\end{thebibliography}
\end{document}